\documentclass[11pt,letterpaper]{article}
\usepackage[margin=1in]{geometry}
\usepackage{amsmath,amssymb,amsthm}
\usepackage{booktabs}
\usepackage{array}
\usepackage{xcolor}
\usepackage{graphicx}
\usepackage{tikz}
\usetikzlibrary{arrows.meta}
\usepackage{algorithm}
\usepackage[noend]{algpseudocode}
\usepackage{enumitem}
\usepackage{microtype}
\usepackage[hidelinks]{hyperref}

\theoremstyle{plain}
\newtheorem{theorem}{Theorem}
\newtheorem{lemma}[theorem]{Lemma}

\theoremstyle{definition}
\newtheorem{definition}[theorem]{Definition}

\algblockdefx[Upon]{Upon}{EndUpon}[1]{\textbf{upon} #1 \textbf{do}}{}
\algnotext{EndUpon}
\algrenewcommand\algorithmicindent{1.2em}

\newcommand{\PROPOSAL}{\textnormal{\textsc{proposal}}}
\newcommand{\PRECOMMIT}{\textnormal{\textsc{precommit}}}
\newcommand{\nil}{\mathsf{nil}}
\newcommand{\validRound}{\mathit{validRound}}
\newcommand{\validValueId}{\mathit{validValueId}}
\newcommand{\validPair}[1]{(\validRound_{#1},\validValueId_{#1})}
\newcommand{\round}{\mathit{round}}
\newcommand{\step}{\mathit{step}}
\newcommand{\id}{\mathit{id}}
\newcommand{\valid}{\mathit{valid}}
\newcommand{\decision}{\mathit{decision}}
\newcommand{\committedTip}{\mathit{committedTip}}
\newcommand{\propose}{\mathit{propose}}
\newcommand{\pcstep}{\mathit{precommit}}
\newcommand{\proposer}{\mathit{proposer}}
\newcommand{\tprop}{\mathit{timeoutPropose}}
\newcommand{\twait}{\mathit{timeoutWait}}
\newcommand{\tprec}{\mathit{timeoutPrecommit}}
\newcommand{\GST}{\mathrm{GST}}
\newcommand{\new}[1]{\textcolor{blue}{#1}}

\title{Revisiting Fast Tendermint:\\An Agreement Violation and Two Repairs}
\author{Alejandro Ranchal-Pedrosa\\Sei Labs}
\date{October 2026}

\begin{document}
\maketitle

\begin{abstract}
Fast Tendermint reduces Tendermint's good-case latency to two communication steps for $n\geq5f+1$. We show that its published round-change rule violates Agreement: one Byzantine process suffices for every $f\geq1$, and for $f\geq2$ two Byzantine processes suffice without double signing. The failure occurs when a process advances past a round without recording the value that may have been decided there. The Quint specification that accompanies the protocol does not allow this kind of round skipping and is not affected; the violation lies in the published pseudocode. We present two repairs. Walk requires a process to leave each round on a quorum of that round's precommits and to precommit before leaving. It keeps the recorded valid state and removes the wait after a precommit quorum. Justify carries the evidence needed to accept a proposal in the proposal itself, removing both the recorded valid state and the proposer's synchronization wait. Both satisfy Agreement and Validity under asynchrony and Termination under the partially synchronous gossip model of Tendermint and Fast Tendermint. Both retain proposal timeouts. The main text presents the executions that violate Agreement and explains the two repairs. The appendices provide proofs, a brief comparison with other round-change designs, and an extension to chains.
\end{abstract}

\section{Introduction}

Tendermint~\cite{tendermint} tolerates $f<n/3$ Byzantine processes and has a good-case latency of three communication steps: a proposal, prevotes and precommits. Fast Tendermint~\cite{ft} adapts it to $f<n/5$ and reduces this to two. The prevote step is dropped, a proposal is answered directly with a precommit, and a value is decided once $n-f$ precommits support it.

Tendermint records both a locked value and a valid value, each with its round. A process locks a valid proposal after receiving $2f+1$ matching prevotes and then broadcasts a non-nil precommit. The lock persists across round changes; voting for a conflicting value requires a sufficiently recent justification. Fast Tendermint replaces these two pairs with a single pair, updated when the process receives $2f+1$ matching precommits. Sending a precommit alone does not update it.

The published protocol lets a process advance on a quorum of any round at or above its own, but its observation rule records support only from the current round. A process can therefore advance past a round without recording the value that may have been decided there, and later help certify a different value. Section~\ref{sec:violation} gives two concrete executions: one Byzantine process suffices for every $f\geq1$; for $f\geq2$, two suffice without sending conflicting precommits.

We focus on two constructions. \emph{Walk} is a direct adaptation for a Tendermint codebase: it preserves the valid state, advances through rounds sequentially on a quorum of the current round, and requires a precommit before leaving. \emph{Justify} carries the required evidence in the proposal, so acceptance no longer depends on the recipient's private valid state. This removes the proposer's evidence wait at the cost of larger proposals. Both preserve a single precommit per process per round and keep proposal timeouts for silent proposers.

The main text presents the violation and the two repairs. Appendix~\ref{app:ft} reproduces the published algorithm. Appendix~\ref{app:evidence} establishes the evidence lemmas, and Appendices~\ref{app:walk} and~\ref{app:justify} prove Walk and Justify. Appendix~\ref{app:fastexit} gives their optional support-certificate and dead-certificate exits. The remaining appendices discuss alternative designs, the role of the precommit timer, an optional shortcut, chains and implementation details. The three alternative designs are sketches, separate from the correctness claims for the main constructions.

\paragraph{Status.}
We reported the violation and two fixes to the authors of~\cite{ft} in October 2026. They reproduced it by modeling the published pseudocode in Quint. According to the authors, the Quint specification that accompanies~\cite{ft}, which they consider the reference specification of the protocol, does not allow round skipping and is not affected by the executions of Section~\ref{sec:violation}. The violation therefore lies in the published pseudocode, Algorithm~1 of~\cite{ft}, which is the protocol analyzed here.

\section{Model and the published protocol}
\label{sec:model}

There are $n\geq5f+1$ processes, at most $f$ of which are Byzantine. The others are correct and follow the protocol. Messages are signed. A \emph{precommit} is a message $\langle\PRECOMMIT,s,x\rangle$, where $s$ is a round and $x$ is a value identifier $\id(v)$ or $\nil$. The symbol $\nil$ is not a value identifier. A process \emph{equivocates} in round $s$ if it sends precommits with different contents for that round. Correct processes send at most one precommit per round.

Proposers rotate in a fixed round-robin order, so infinitely many rounds have correct proposers. Values satisfy a deterministic per-height application predicate $\valid$, common to all correct processes. The function $\mathit{getValue}()$ returns an application-valid value, and $\id$ is collision resistant.

Agreement means that no two correct processes decide different values. Validity means that decided values satisfy $\valid$. Termination means that every correct process eventually decides.

\paragraph{Timing and dissemination.}
We use the partially synchronous model of Tendermint and Fast Tendermint~\cite{tendermint,ft}, including their gossip property: if a correct process sends or receives a protocol message at time $t$, every correct process receives it by $\max\{t,\GST\}+\Delta$. Before the Global Stabilization Time $\GST$, the adversary may delay messages arbitrarily. The bound $\Delta$ covers end-to-end dissemination, including relaying and recovery of retained messages, rather than one network hop.

Agreement and Validity hold under asynchrony and use no timing assumption. Every Termination result assumes the gossip property and the timeout bounds stated with that result. Appendix~\ref{app:implementation} explains the corresponding requirements for relaying and value retrieval.

Executions have finitely many events in every bounded time interval. Enabled local rules are processed without delay; each rule is atomic except where it explicitly waits, and an aborted proposer continuation is permanently cancelled. Correct processes retain their protocol state. In the per-height protocols they stop taking consensus actions after deciding, while continuing to disseminate the decision evidence and value. The chained extension explicitly removes this stopping condition.

For the repaired algorithms, an ``upon'' condition is a persistent predicate over retained authenticated messages and verified attachments. It is re-evaluated after a message arrives, a round begins or a wait begins. Certificates count distinct senders; conflicting Byzantine signatures remain available as evidence. Broadcast includes immediate self-delivery.

Every consensus handler requires $\decision_p=\nil$. A decision sets this variable once and cancels pending consensus timers and proposer continuations, while dissemination continues. A round change permanently cancels the departed round's proposer continuation. Immediately before a proposal is broadcast, the process checks $\round_p=r$, $\step_p=\propose$ and $\decision_p=\nil$. These checks, payload selection and broadcast form one atomic action. These execution conventions apply to the repaired algorithms, not to the historical listing in Appendix~\ref{app:ft}.

\paragraph{Terminology and notation.}
We follow Tendermint's terminology: a consensus instance is a height, each height proceeds in rounds, and each round has a proposer and protocol steps. For one height we suppress the height argument in messages, $\proposer$ and timeout handlers, and write $\decision_p$ for Tendermint's $\decision_p[h_p]$. We use $\valid(v)$ for application validity.

The recorded state consists of $\validRound_p$ and $\validValueId_p$, adapting Tendermint's $\mathit{validValue}$ to the identifier-only state of Fast Tendermint. Recording a certificate does not require possession of the full value. An assignment such as $\validPair{p}\gets(s,\id(v))$ updates both components atomically. We retain \emph{fresh proposal}, \emph{reproposal}, $\mathit{WaitForValid}$ and the observation rule from Fast Tendermint. We use Minimmit's \emph{view}, \emph{notarization} and \emph{nullification} only when discussing that protocol.

\paragraph{Published protocol.}
We restate the published protocol for one height by naming its rules; line numbers refer to Algorithm~1 of~\cite{ft}, which Appendix~\ref{app:ft} reproduces in our notation as Algorithm~\ref{alg:ft} with the original line numbers. The state is $\round_p$, $\step_p \in \{\propose, \pcstep\}$, $\decision_p$, $\validRound_p$ and $\validValueId_p$. The last two are initially $-1$ and $\nil$, respectively, and together represent the published pair $\mathit{valid}_p=(\round,\id)$.
\begin{itemize}[nosep]
\item \emph{Start.} $\mathit{StartRound}(r)$ sets $\round_p \gets r$ and $\step_p \gets \propose$. The proposer of a round $r>0$ first runs $\mathit{WaitForValid}$ for $\tprec(r)$: while $\validRound_p < r-1$ it adopts any certificate (defined below) of a round above $\validRound_p$. It then broadcasts $\langle\PROPOSAL, r, v, -1\rangle$ with a fresh value $v$ if $\validRound_p = -1$, and otherwise reproposes $\langle\PROPOSAL, r, \validValueId_p, \validRound_p\rangle$. Other processes schedule $\tprop(r)$, on whose expiry a process still in step $\propose$ precommits $\nil$.
\item \emph{Fresh proposal} (line 18). On $\langle\PROPOSAL, \round_p, v, -1\rangle$ from the proposer while in step $\propose$: precommit $\id(v)$ if $\valid(v)$ and ($\validRound_p = -1$ or $\validValueId_p = \id(v)$), else precommit $\nil$; enter step $\pcstep$.
\item \emph{Reproposal} (line 24). On $\langle\PROPOSAL, \round_p, \id(v), vr\rangle$ with $0 \leq vr < \round_p$ and $2f+1$ precommits for $\id(v)$ in round $vr$, while in step $\propose$: if $\validRound_p \leq vr$ or $\validValueId_p = \id(v)$, then set $\validPair{p} \gets (vr,\id(v))$ when $\validRound_p \leq vr$ and precommit $\id(v)$; else precommit $\nil$. Enter step $\pcstep$.
\item \emph{Observation} (line 32). On $2f+1$ precommits for $\id(v)$ in round $\round_p$ with $\round_p > \validRound_p$: $\validPair{p} \gets (\round_p, \id(v))$.
\item \emph{Quorum} (line 34). On $n-f$ precommits of any round $r \geq \round_p$, for the first time: schedule $\mathit{OnTimeoutPrecommit}(r)$ after $\tprec(r)$, which starts round $r+1$ if $r \geq \round_p$ and no decision has been made.
\item \emph{Decision} (line 36). On $\langle\PROPOSAL, r, v, -1\rangle$ from $\proposer(r)$ and $n-f$ precommits for $\id(v)$ in some round $r'$: decide $v$.
\end{itemize}
Precommitting changes no state other than $\step_p$. The observation rule is the only rule that raises $\validRound_p$ to the current round, and it sees only precommits of the current round. The quorum rule is the only path that moves a process past a round, and it accepts quorums of rounds above the current one. The mismatch between these rules allows a process to skip a round without recording its certified value.

\paragraph{Timers.}

Both main constructions retain $\tprop$, the proposal timeout needed when a proposer is silent. Walk also retains the proposer's evidence wait $\twait$; Justify removes that wait. Both remove $\tprec$, the delay before advancing after a precommit quorum. The timeout conditions are stated with the termination proofs. Appendix~\ref{app:alternatives} discusses alternatives that retain a timer for round advancement.

\section{Agreement violation}
\label{sec:violation}

\begin{figure}[tbp]
\centering
\resizebox{\textwidth}{!}{%
\begin{tikzpicture}[
  font=\scriptsize,
  cell/.style={draw, rounded corners=2pt, text width=2.85cm, minimum height=1.45cm, align=center, inner sep=2pt,
    execute at begin node={\hyphenpenalty=10000\exhyphenpenalty=10000}},
  gone/.style={cell, dashed, draw=black!40, text=black!55},
  done/.style={cell, fill=black!8},
  bad/.style={cell, very thick, draw=red!70!black, fill=red!6},
  hdr/.style={text width=2.9cm, align=center},
  lane/.style={text width=1.5cm, align=center, font=\small},
  leak/.style={-{Stealth[length=2mm]}, very thick, red!70!black},
  leaktext/.style={font=\scriptsize, text=red!70!black, inner sep=2pt},
  ev/.style={text width=2.9cm, align=center, text=black!75,
    execute at begin node={\hyphenpenalty=10000\exhyphenpenalty=10000}}
]
\node[hdr] at (0,1.4) {\textbf{round 0}\\ \textit{strand $S$}};
\node[hdr] at (3.15,1.4) {\textbf{round 1}\\ \textit{$v$ reaches $n-f$}};
\node[hdr] at (6.3,1.4) {\textbf{round 2}\\ \textit{carry $C_1$ out}};
\node[hdr] at (9.45,1.4) {\textbf{round 3}\\ \textit{certify $w$}};
\node[hdr] at (12.6,1.4) {\textbf{round 4}\\ \textit{decide $w$}};
\node[hdr] at (15.75,1.4) {\textbf{afterwards}\\ \textit{release to $C_1$}};
\node[lane] at (-2.45,0) {$N$\\[1pt] \scriptsize $n-2f-1$};
\node[lane] at (-2.45,-2.45) {$S$\\[1pt] \scriptsize $f$ in (a)\\ $f-1$ in (b)};
\node[lane] at (-2.45,-4.9) {$C_1$\\[1pt] \scriptsize $f$};
\node[lane] at (-2.45,-7.35) {$B$\\[1pt] \scriptsize $\{b\}$ in (a)\\ $\{b_1,b_2\}$ in (b)};
\node[cell] (n0) at (0,0) {precommit $\nil$ on timeout; leaves on a round-0 quorum};
\node[cell] (n1) at (3.15,0) {$p_1$ proposes $v$; precommit $v$; observes a certificate: $(1,v)$};
\node[cell] (n2) at (6.3,0) {$p_2$ reproposes $v$, delivered only to itself; others precommit $\nil$};
\node[cell] (n3) at (9.45,0) {precommit $\nil$; observes a certificate for $w$: $(3,w)$};
\node[done] (n4) at (12.6,0) {$p_4$ reproposes $w$; precommit $w$; \textbf{decide $w$}};
\node[cell] (s0) at (0,-2.45) {$p_0$ proposes $u$; nothing is delivered; stays in round 0};
\node[gone] (s1) at (3.15,-2.45) {never in round 1};
\node[cell] (s2) at (6.3,-2.45) {enters with\\ $(-1,\nil)$; precommit $\nil$ on timeout};
\node[cell] (s3) at (9.45,-2.45) {precommit $w$; observes a certificate for $w$: $(3,w)$};
\node[done] (s4) at (12.6,-2.45) {precommit $w$; \textbf{decide $w$}};
\node[cell] (c0) at (0,-4.9) {precommit $\nil$ on timeout; leaves on a round-0 quorum};
\node[cell] (c1) at (3.15,-4.9) {precommit $v$; receives only its own precommit};
\node[gone] (c2) at (6.3,-4.9) {never in round 2};
\node[cell] (c3) at (9.45,-4.9) {enters with\\ $(-1,\nil)$; precommit $w$; then cut off};
\node[gone] (c4) at (12.6,-4.9) {cut off};
\node[bad] (c5) at (15.75,-4.9) {withheld round-1 precommits arrive, $n-f$ for $v$;\\ \textbf{decide $v$}};
\node[cell] (b0) at (0,-7.35) {precommit $\nil$};
\node[cell] (b1) at (3.15,-7.35) {(a) $b$: $v$ to $C_1$, $\nil$ to the rest\\ (b) $b_1$: $v$; $b_2$: $\nil$};
\node[cell] (b2) at (6.3,-7.35) {precommit $\nil$};
\node[cell] (b3) at (9.45,-7.35) {$b$, resp.\ $b_1$, proposes $w$; precommit $w$};
\node[cell] (b4) at (12.6,-7.35) {precommit $w$};
\draw[leak] (s0.south) -- ++(0,-0.4) -| node[pos=0.25, below, leaktext] {leak 1: leaves round 0 on a round-1 quorum} (s2.south);
\draw[leak] (c1.south) -- ++(0,-0.4) -| node[pos=0.25, below, leaktext] {leak 2: leaves round 1 on a round-2 quorum} (c3.south);
\draw[black!30] (-3.3,-8.45) -- (17.3,-8.45);
\node[lane] at (-2.45,-9.2) {evidence};
\node[ev] at (0,-9.2) {no certificate};
\node[ev] at (3.15,-9.2) {$v$ has $n-f$ precommits; $N\cup S$ hold $n-f-1$ of them};
\node[ev] at (6.3,-9.2) {one precommit for $v$, $n-f-1$ for $\nil$};
\node[ev] at (9.45,-9.2) {$w$ has $2f+1$ $=|S|+|C_1|+|B|$};
\node[ev] at (12.6,-9.2) {$w$ has $n-f$ $=|N|+|S|+|B|$};
\node[ev, text=red!70!black] at (15.75,-9.2) {\textbf{Agreement violated}};
\end{tikzpicture}%
}
\caption{The executions of Theorem~\ref{thm:violation}, all before $\GST$. Rows are the sets of the proof and columns are rounds. $N$ contains the proposers $p_1$, $p_2$ and $p_4$ of rounds 1, 2 and 4, $S$ contains the proposer $p_0$ of round 0, and the proposer of round 3 is Byzantine. A pair $(r,x)$ is the recorded state $\validPair{}$, and dashed cells are rounds a set never enters. The red arrows are the two ways of leaving round~1 without recording $(1,v)$: $S$ leaves round~0 on a round-1 quorum, and $C_1$ leaves round~1 on a round-2 quorum. Execution~(b) differs from (a) only in the sizes of $S$ and $B$, in the Byzantine precommits of round~1 shown in row $B$, and in withholding the round-1 precommit of one fixed member of $C_1$ from $N\cup S$.}
\label{fig:violation}
\end{figure}

\begin{theorem}
\label{thm:violation}
Let $n \geq 5f+1$. Algorithm~1 of~\cite{ft} violates Agreement.
\begin{enumerate}[nosep,label=(\alph*)]
\item For every $f \geq 1$, one Byzantine process suffices, and it equivocates exactly once.
\item For every $f \geq 2$, two Byzantine processes suffice, and neither of them ever equivocates.
\end{enumerate}
\end{theorem}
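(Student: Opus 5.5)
The proof is a constructive counterexample. I will exhibit, for each part, one explicit execution of Algorithm~1 of~\cite{ft} before $\GST$ that follows Figure~\ref{fig:violation}. I partition the processes into $N$ ($n-2f-1$ correct), $S$ ($f$ correct in (a), $f-1$ in (b)), $C_1$ ($f$ correct) and $B$ (Byzantine). The proposers $p_1,p_2,p_4$ are placed in $N$, $p_0$ in $S$, and a Byzantine process is made proposer of round~3. This is possible because round-robin lets me choose the labelling for the first five rounds.

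I then walk through the rounds and check, for every correct process, that each step it takes is enabled by a rule listed in Section~\ref{sec:model} and that no other rule fires in a way that breaks the script. Since the execution is before $\GST$, the adversary controls delivery; I only need each receipt I use to be genuine.

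\textbf{Round 0.} Nothing from $p_0$ is delivered. $N$, $C_1$ and $B$ time out and precommit $\nil$. That gives $n-2f-1+f+|B|\ge n-f$ round-0 precommits, so the quorum rule moves $N\cup C_1$ to round~1. $S$ is kept in round~0.

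\textbf{Round 1.} $p_1$ proposes a fresh $v$, and $N\cup C_1$ precommit $\id(v)$. In (a), $b$ sends $v$ only to $C_1$ and $\nil$ to everyone else. In (b), $b_1$ precommits $v$ and $b_2$ precommits $\nil$. The counts are as follows. $v$ has $|N|+|C_1|+1=n-f$ precommits. $N$ and $S$ see at least $2f+1$ of them, so $N$ records $(1,v)$ via the observation rule. However, $S$ is still in round~0; delivering round-1 traffic to it first lets it leave round~0 through the quorum rule on a round-1 quorum (any $n-f$ round-1 precommits). That is leak~1, and $S$ never records $(1,v)$. I must make sure no correct process in $N\cup S$ sees $n-f$ for $v$ together with the proposal, since the decision rule would otherwise fire. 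In (a) this is the point of $b$'s equivocation. In (b) it is achieved by withholding one $C_1$ precommit from $N\cup S$, so they see exactly $n-f-1$. Each $C_1$ member receives only its own round-1 precommit, so it records nothing.

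\textbf{Round 2.} $p_2$'s reproposal of $v$ reaches only itself. $N\setminus\{p_2\}$, $S$ and $B$ time out and precommit $\nil$, which is $n-f-1$ precommits, and $p_2$ precommits $v$. Delivering these $n-f$ round-2 precommits to $C_1$ moves it from round~1 to round~3 without ever observing a round-1 certificate (leak~2). I need $S$ to enter round~2 with state $(-1,\nil)$, and I need the $N$ members to reach round~3.

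\textbf{Round 3.} The Byzantine proposer sends a fresh $w\neq v$. $S$ and $C_1$ have $\validRound=-1$, so they accept and precommit $w$. $N$ holds $(1,v)$ and precommits $\nil$. With $B$ this gives $|S|+|C_1|+|B|=2f+1$ for $w$, so everyone in $N\cup S$ observes $(3,w)$. $C_1$ is then cut off.

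\textbf{Round 4.} $p_4$ reproposes $(w,3)$ with the round-3 certificate. $N\cup S$ hold $\validRound=3$, so they precommit $w$. With $B$ this is $n-f$, so they decide $w$. The decision rule needs the fresh proposal of $w$, which they received in round~3.

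\textbf{Afterwards.} The withheld round-1 precommits reach $C_1$: $n-f$ for $v$ plus $p_1$'s fresh proposal. $C_1$ decides $v$, which contradicts Agreement.

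The main obstacle is the bookkeeping in round~1. The decision rule must be blocked for $N\cup S$ while still giving $v$ its $n-f$ precommits for $C_1$, and in (b) this has to be done without equivocation, using the withheld $C_1$ precommit. Alongside this, I have to check that no quorum or timeout fires early elsewhere, in particular that $S$ never observes a round-1 certificate, and I will verify the thresholds $2f+1\le n-f-1$ using $n\ge 5f+1$. The requirement $f\ge2$ in (b) comes from $|S|=f-1\ge1$, since $p_0$ must lie in $S$.
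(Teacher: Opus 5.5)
Your proposal is correct and follows the paper's proof essentially step for step. It uses the same partition $N,S,C_1,B$ and proposer placement, the same two leaks ($S$ leaving round~0 on a round-1 quorum, $C_1$ leaving round~1 on a round-2 quorum), and the same counts, with the round-1 decision at $N\cup S$ blocked by $b$'s single equivocation in (a) and by withholding one $C_1$ precommit in (b).

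Two points to tighten. First, the paper makes explicit the scheduling you only flag. Since round~0 has no certificate, $p_1$ runs $\mathit{WaitForValid}$ for the full $\tprec(1)$, so the round-0 quorum must reach the other round-1 voters only shortly before that wait ends; otherwise their $\tprop(1)$ may expire first. Second, $f\geq2$ in (b) is forced by the fault bound itself, because (b) uses two Byzantine processes, and not by where $p_0$ is placed.
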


\begin{proof}
Figure~\ref{fig:violation} summarizes both executions. All line references in this proof are to Algorithm~1 of~\cite{ft}, reproduced with its original numbering as Algorithm~\ref{alg:ft} in Appendix~\ref{app:ft}. They describe the actions of correct processes; Byzantine messages are chosen by the adversary.

Both executions use the same arrangement of processes. Partition the correct processes into three sets: $S$, the processes that will be stranded in round~0; $C_1$, the processes that will precommit the decided value but leave round~1 on a later quorum; and $N$, the rest. Let $B$ be the Byzantine processes. For (a), $|S| = f$, $|C_1| = f$ and $B = \{b\}$. For (b), $|S| = f-1$, $|C_1| = f$ and $B = \{b_1, b_2\}$. In both cases $|N| \geq 3f$.

The proposers are chosen as follows: $\proposer(0) \in S$, $\proposer(1), \proposer(2), \proposer(4) \in N$, and $\proposer(3) = b$ in (a), $b_1$ in (b); a round-robin order allows this.

Everything happens before $\GST$, so the adversary chooses every delay, and a broadcast reaches its sender immediately. Every withheld message is delayed in all copies, including gossip relays and attachments. Round-0 quorum deliveries are staggered so that the round-1 proposer enters early and the other intended voters enter shortly before its $\mathit{WaitForValid}$ expires; no inequality between $\tprop(1)$ and $\tprec(1)$ is needed. Later quorum deliveries align the intended participants' entries into rounds 2, 3 and 4 before the corresponding proposal deliveries.

The rounds serve the following purposes: round~0 strands $S$, round~1 produces $n-f$ precommits for a value $v$, round~2 carries $C_1$ out of round~1 without round-1 information, round~3 certifies a value $w \neq v$, and round~4 decides $w$.

\emph{Round 0.} The proposer broadcasts a fresh value $u$ (lines~\ref{ft:11} to~\ref{ft:15}). The adversary delivers it to no process but the proposer, which precommits $\id(u)$ (lines~\ref{ft:18} to~\ref{ft:20}); every other correct process precommits $\nil$ when $\tprop(0)$ expires (lines~\ref{ft:42} to~\ref{ft:45}, scheduled at line~\ref{ft:17}). The Byzantine processes also send $\nil$. The round-0 precommits of $N \cup C_1 \cup B$ are delivered to $N \cup C_1$; there are at least $n-f$ of them, so the quorum rule schedules the precommit timer (lines~\ref{ft:34} and~\ref{ft:35}) and $N \cup C_1$ start round~1 when it expires (lines~\ref{ft:46} to~\ref{ft:48}). Nothing is delivered to $S$, which stays in round~0. No value has $2f+1$ precommits in round~0.

\emph{Round 1.} The proposer $p_1 \in N$ finds no certificate in $\mathit{WaitForValid}$ (lines~\ref{ft:38} to~\ref{ft:41}) and, when the wait expires, broadcasts a fresh value $v$ (lines~\ref{ft:11} to~\ref{ft:15}). It is delivered to every process in $N \cup C_1$ before its $\tprop(1)$ expires, and they all precommit $\id(v)$ (lines~\ref{ft:18} to~\ref{ft:20}). In (a), $b$ sends $\langle\PRECOMMIT,1,\id(v)\rangle$ to $C_1$ and $\langle\PRECOMMIT,1,\nil\rangle$ to $N \cup S$: this is its one equivocation. In (b), $b_1$ sends $\langle\PRECOMMIT,1,\id(v)\rangle$ to everyone and $b_2$ sends $\langle\PRECOMMIT,1,\nil\rangle$ to everyone. 

Each process in $C_1$ receives no round-1 precommit but its own. Each process in $N \cup S$ receives, in (a), the precommits for $\id(v)$ of $N \cup C_1$ and the $\nil$ of $b$; in (b), the precommits for $\id(v)$ of $N \cup C_1$ minus one fixed member of $C_1$, the precommit of $b_1$, and the $\nil$ of $b_2$. In both cases these are precommits from $n-f$ distinct processes, exactly $n-f-1$ of them for $\id(v)$. 

In $N$ the observation rule fires (lines~\ref{ft:32} and~\ref{ft:33}), since $n-f-1 \geq 2f+1$, and sets $\validPair{} = (1,\id(v))$; the quorum rule schedules the timer (lines~\ref{ft:34} and~\ref{ft:35}); the decision rule (line~\ref{ft:36}) does not fire, and $N$ start round~2 when the timer expires (lines~\ref{ft:46} to~\ref{ft:48}). In $S$, still in round~0, the quorum rule fires because $1 \geq 0$ (lines~\ref{ft:34} and~\ref{ft:35}). The observation condition at line~\ref{ft:32} does not hold because the precommits are of round~1 and not of round~0. When the timer expires, lines~\ref{ft:46} to~\ref{ft:48} let $S$ start round~2 with $\validPair{} = (-1,\nil)$. Thus $S$ have skipped round~1 without recording its certified value. Processes in $C_1$ stay in round~1. Nevertheless, $v$ already has $n-f$ precommits in round~1: those of $N \cup C_1$ and, in (a), the copy $b$ sent to $C_1$; in (b), the one of $b_1$.

\emph{Round 2.} The proposer $p_2 \in N$ has $\validRound = 1$, so the loop condition at line~\ref{ft:39} is already false. $\mathit{WaitForValid}$ returns at once and $p_2$ reproposes $\langle\PROPOSAL,2,\id(v),1\rangle$ (lines~\ref{ft:14} and~\ref{ft:15}). By self-delivery $p_2$ applies the reproposal rule (lines~\ref{ft:24} to~\ref{ft:31}), which holds since it has a certificate for $\id(v)$ in round~1 and $\validRound = 1 \leq 1$, and precommits $\id(v)$. The adversary delivers the reproposal to no other process before its $\tprop(2)$ expires, so the rest of $N$ and all of $S$ precommit $\nil$ on timeout (lines~\ref{ft:42} to~\ref{ft:45}). All of $B$ also precommit $\nil$. These round-2 precommits, one for $\id(v)$ and $n-f-1$ for $\nil$, are delivered to every process. In $N \cup S$ the quorum rule fires (lines~\ref{ft:34} and~\ref{ft:35}), but the observation condition at line~\ref{ft:32} does not hold. They start round~3 on the timer (lines~\ref{ft:46} to~\ref{ft:48}). In $C_1$, in round~1, the quorum rule fires with $r = 2 \geq 1$ (lines~\ref{ft:34} and~\ref{ft:35}), although line~\ref{ft:32} cannot observe the round-2 evidence from round~1. They start round~3 through lines~\ref{ft:46} to~\ref{ft:48} with $\validPair{} = (-1,\nil)$, having never held $2f+1$ precommits for $\id(v)$. Thus $C_1$ have also left round~1 without recording its certified value. After round~2, $N$ hold $(1,\id(v))$ and $S \cup C_1$ hold $(-1,\nil)$, all in round~3.

\emph{Round 3.} The Byzantine proposer broadcasts $\langle\PROPOSAL,3,w,-1\rangle$ to everyone, with $w \neq v$ and $\valid(w)$. Processes in $S \cup C_1$ have $\validRound = -1$ and precommit $\id(w)$ (lines~\ref{ft:19} and~\ref{ft:20}); processes in $N$ fail the check at line~\ref{ft:19} and precommit $\nil$ at line~\ref{ft:22}; the Byzantine processes precommit $\id(w)$. So $\id(w)$ has $|S|+|C_1|+|B| = 2f+1$ precommits in round~3. All round-3 precommits are delivered to $N \cup S$. The observation rule fires at each of them (lines~\ref{ft:32} and~\ref{ft:33}), since $3$ exceeds their $\validRound$, and sets $\validPair{} = (3,\id(w))$; the quorum rule schedules the timer (lines~\ref{ft:34} and~\ref{ft:35}), and its expiry starts round~4 (lines~\ref{ft:46} to~\ref{ft:48}). From now on nothing is delivered to $C_1$.

\emph{Round 4.} The proposer $p_4 \in N$ reproposes $\langle\PROPOSAL,4,\id(w),3\rangle$ (lines~\ref{ft:14} and~\ref{ft:15}; line~\ref{ft:39} again skips the wait). It reaches $N \cup S$ in time; each applies the reproposal rule (lines~\ref{ft:24} to~\ref{ft:31}), which holds since they have the round-3 certificate for $\id(w)$ and $\validRound = 3 \leq 3$, and precommits $\id(w)$; the Byzantine processes precommit $\id(w)$. That is $|N|+|S|+|B| = n-f$ precommits for $\id(w)$ in round~4. Together with the round-3 fresh proposal of $w$ from $\proposer(3)$, the decision rule fires (lines~\ref{ft:36} and~\ref{ft:37}, with $r=3$ and $r'=4$ in line~\ref{ft:36}) and every process in $N \cup S$ decides $w$.

Finally the adversary delivers to $C_1$ the round-1 precommits for $\id(v)$ it withheld: those of $N \cup C_1$ and the Byzantine one addressed to $C_1$, which are $n-f$ in total, together with $p_1$'s proposal of $v$. The decision rule at line~\ref{ft:36} is not restricted to the current round: it fires with $r=r'=1$, and line~\ref{ft:37} makes every process in $C_1$ decide $v$. Two correct processes have decided different values. All remaining withheld copies can be delivered afterwards; choosing $\GST$ after this finite execution and delivering them by $\GST+\Delta$ respects the dissemination assumption.

The stated round-1 through round-4 quorum and certificate counts are exact for every $n\geq5f+1$, with $|N|=n-2f-1$; the round-0 set in (b) has one extra precommit. In (b), $b_1$ precommits $\id(v)$, $\nil$, $\id(w)$, $\id(w)$ in rounds $1$ to $4$ and $b_2$ precommits $\nil$, $\nil$, $\id(w)$, $\id(w)$; each sends one precommit per round. Proposing a fresh $w$ while holding a certificate for $v$ is not detectable from messages.
\end{proof}

The execution exposes two ways to leave round~1 without recording $(1,\id(v))$. Processes in $S$ skip it from round~0. Their round-1 quorum contains at least $n-3f\geq2f+1$ precommits for $v$, but the observation rule is still looking at round~0. Processes in $C_1$ precommit $v$ in round~1, then leave on a round-2 quorum before receiving its round-1 certificate. That later quorum supplies no evidence about round~1. In both cases the process advances without the valid state needed to preserve Agreement.

\section{Evidence for advancing rounds}
\label{sec:evidence}

A \emph{precommit set} of round $s$ is a set of precommits of round $s$ with at most one precommit per sender. For a precommit set $T$ and a value $x$, let $a_x(T)$ be the number of senders whose precommit in $T$ is for $\id(x)$. Byzantine processes may send several precommits for one round; a precommit set keeps one of them, and the counts below are about the precommit set a process holds, except where they explicitly concern all precommits ever sent.

\begin{definition}
\label{def:certs}
A \emph{support certificate}, or simply a \emph{certificate}, for $x$ in round $s$ is a precommit set of $2f+1$ precommits $\langle\PRECOMMIT,s,\id(x)\rangle$. A \emph{nil certificate} for round $s$ is a precommit set of $2f+1$ precommits $\langle\PRECOMMIT,s,\nil\rangle$.

A \emph{dead certificate} for round $s$ is a precommit set $T$ of round $s$ with $|T| - a_x(T) \geq 2f+1$ for every value $x$: for every $x$, at least $2f+1$ senders in $T$ precommit something other than $\id(x)$.

A \emph{quorum of precommits}, or simply a \emph{quorum}, of round $s$ is a precommit set of $n-f$ precommits of round $s$, possibly for different values or $\nil$. A \emph{decision certificate} for $x$ in round $s$ is a quorum whose precommits are all for $\id(x)$.
\end{definition}

A support certificate does not by itself decide a value. A decision certificate permits a decision once the value is known. A nil certificate is a dead certificate. A quorum in which no value has $2f+1$ precommits is also a dead certificate when $n \geq 5f+1$: for every $x$, at least $n-f-2f \geq 2f+1$ of its senders precommit something else. Adding a precommit from a new sender preserves the defining condition of a dead certificate.

A dead certificate rules out a decision in its round, but does not rule out a support certificate in that round. Unlike a nullification in Minimmit~\cite{minimmit}, it consists of ordinary precommits and introduces no separate signed message.

\paragraph{Certificates and decisions.}
A support certificate for $x$ rules out a decision for a different value in the same round: the support and decision signer sets share a correct process, which cannot precommit twice. A dead certificate rules out every decision in its round. Neither certificate, on its own, establishes what happened in an earlier round. This limitation is central to the violation: the published rule permits advancement using evidence from a later round. Lemmas~\ref{lem:evidence} and~\ref{lem:exit} in Appendix~\ref{app:evidence} give the counting arguments.

Both repairs require evidence for every round a process passes. A process leaves only its current round, using a quorum of that round. Walk records certified support in local state. Justify carries the relevant certificates in its proposals. Neither allows a process to skip an intermediate round merely because a later round has accumulated a mixed quorum. Appendix~\ref{app:fastexit} separately shows how a support certificate or a dead certificate of the current round can replace the quorum as exit evidence.

\section{Walk}
\label{sec:walk}

Walk is the protocol we recommend for an existing Tendermint codebase. Its main changes concern leaving a round: a process leaves only on a quorum of its current round, precommits before leaving, and advances without the post-quorum wait. Algorithm~\ref{alg:walk} gives the full protocol, with these changes highlighted in blue. The accompanying text explains the remaining specification details.

\setcounter{algorithm}{1}
\begin{algorithm}[htp]
\caption{Walk, one height, process $p$. Blue lines highlight the round-change repair. ``precommit $y$'' stands for broadcast $\langle\PRECOMMIT,\round_p,y\rangle$. Rule~\ref{ln:obs} is applied before Rule~\ref{ln:exit} whenever both are enabled, including immediately after round entry. Rule~\ref{ln:exit} applies at any time, including during $\mathit{WaitForValid}$, which it aborts. Messages of rounds above $\round_p$ are kept and take effect when $\round_p$ reaches their round.}
\label{alg:walk}
\small
\begin{algorithmic}[1]
\State $\round_p := 0$; $\step_p := \propose$; $\decision_p := \nil$; $\validPair{p} := (-1,\nil)$
\Upon{start}
  \State $\mathit{StartRound}(0)$
\EndUpon
\Function{StartRound}{$r$}
  \State $\round_p \gets r$; $\step_p \gets \propose$
  \If{$\proposer(r) = p$}
    \State \textbf{if} $r > 0$ \textbf{then} $\mathit{WaitForValid}(r)$
    \State \textbf{if} $\round_p\neq r\vee\step_p\neq\propose\vee\decision_p\neq\nil$ \textbf{then return}
    \State $\mathit{proposal} \gets \validValueId_p$ \textbf{if} $\validValueId_p \neq \nil$ \textbf{else} $\mathit{getValue}()$
    \State broadcast $\langle\PROPOSAL, r, \mathit{proposal}, \validRound_p\rangle$
  \Else
    \State schedule $\mathit{OnTimeoutPropose}(r)$ after $\tprop(r)$
  \EndIf
\EndFunction
\Function{WaitForValid}{$r$}
  \State \new{wait until $\validRound_p = r-1$ or $\twait(r)$ has elapsed; meanwhile:}
  \State \new{\textbf{upon} a certificate for $x$ in a round $s$ with $\validRound_p < s \leq r-1$ \textbf{do} $\validPair{p} \gets (s,\id(x))$} \label{ln:wfv}
\EndFunction
\Upon{$\langle\PROPOSAL, \round_p, v, -1\rangle$ from $\proposer(\round_p)$ while $\step_p = \propose$} \label{ln:fresh}
  \State \textbf{if} $\valid(v) \wedge (\validRound_p = -1 \vee \validValueId_p = \id(v))$ \textbf{then} precommit $\id(v)$ \textbf{else} precommit $\nil$
  \State $\step_p \gets \pcstep$
\EndUpon
\Upon{$\langle\PROPOSAL, \round_p, \id(v), vr\rangle$ from $\proposer(\round_p)$ with $0 \leq vr < \round_p$ and a certificate for $v$ in round $vr$, while $\step_p = \propose$} \label{ln:repro}
  \If{$\validRound_p \leq vr \vee \validValueId_p = \id(v)$}
    \State \textbf{if} $\validRound_p \leq vr$ \textbf{then} $\validPair{p} \gets (vr,\id(v))$
    \State precommit $\id(v)$
  \Else
    \State precommit $\nil$
  \EndIf
  \State $\step_p \gets \pcstep$
\EndUpon
\Upon{$2f+1$ precommits for $\id(x)$ in round $\round_p$ with $\round_p > \validRound_p$} \label{ln:obs}
  \State $\validPair{p} \gets (\round_p,\id(x))$
\EndUpon
\Upon{\new{a quorum of round $\round_p$, for the first time in this round}} \label{ln:exit}
  \State \new{\textbf{if} $\step_p = \propose$ \textbf{then} precommit $y$; $\step_p \gets \pcstep$, where $y = \id(x)$ for a value $x$ with $2f+1$ precommits in the quorum, chosen deterministically if several, and $y = \nil$ if there is none} \Comment{exit precommit} \label{ln:exitvote}
  \State \new{\textbf{if} $\decision_p = \nil$ \textbf{then} $\mathit{StartRound}(\round_p+1)$} \label{ln:exitstart}
\EndUpon
\Upon{$n-f$ precommits $\langle\PRECOMMIT, s, \id(v)\rangle$ from distinct processes, for some round $s$, and $v$ is known} \label{ln:decide}
  \State $\decision_p \gets v$
\EndUpon
\Function{OnTimeoutPropose}{$r$}
  \State \textbf{if} $r = \round_p \wedge \step_p = \propose$ \textbf{then} precommit $\nil$; $\step_p \gets \pcstep$
\EndFunction
\end{algorithmic}
\end{algorithm}

Walk restricts the quorum rule to the current round and removes the precommit timer and its handler. If a process has not yet precommitted when it leaves, it first precommits a value with $2f+1$ precommits in the quorum, choosing deterministically if several qualify. If no value qualifies, it precommits $\nil$.

This \emph{exit precommit} is an ordinary $\PRECOMMIT$. A non-nil exit precommit requires a support certificate but not the full value. It is cast in the current round, so the existing signer restriction to monotone (height, round) pairs is preserved. The observation rule (line~\ref{ln:obs}) is unchanged.

The proposer runs $\mathit{WaitForValid}$ with a separate timeout $\twait(r)$, adopting only certificates from rounds up to $r-1$. The observation rule handles a certificate of the current round; messages from later rounds remain buffered until those rounds are reached. A decision requires $n-f$ matching precommits from one round and the full value, which may come from any source.

\paragraph{Agreement.}
Suppose $v$ obtains a decision certificate in round $r$. Every quorum of round $r$ contains at least $n-3f \geq 2f+1$ precommits for $v$ and fewer than $2f+1$ precommits for any other value. Observation takes priority over exit, so every correct process leaving $r$ records support for $v$ first.

After leaving $r$, a correct process can accept a conflicting value only with a sufficiently recent certificate for that value. But forming the first such certificate would require correct processes to have already precommitted a conflicting value. The proof formalizes this dependence by proving three invariants together, covering the recorded state, later precommits and later certificates.

\paragraph{Precommitting before leaving a round.}
Without the post-quorum wait, a process may receive a quorum and leave before the proposal arrives. Its missing precommit can prevent a decision even when the proposal is valid. The exit precommit ensures that the process participates before leaving: it precommits a value supported in the quorum, or $\nil$ if no value has enough support. It still signs only in its current round. The proposal timeout handles a silent proposer, while $\mathit{WaitForValid}$ gives a correct proposer time to learn the certificates recorded by other processes.

\paragraph{Guarantees.}
At $n\geq5f+1$ and at most $f$ Byzantine processes, Walk satisfies Agreement, Validity and Termination (Appendix~\ref{app:walk}); Agreement and Validity use no timing assumption. The optional all-nil shortcut is disabled in every termination result here; Appendix~\ref{sec:shortcut} discusses its unresolved termination argument.

\section{Justify}
\label{sec:justify}

In Walk, each process keeps a private valid state, and the proposer may wait up to $\twait$ to learn the certificates recorded by other processes. Justify carries the evidence needed for acceptance in the proposal itself. It removes the recorded valid state and $\mathit{WaitForValid}$, at the cost of larger proposals.

\begin{definition}
\label{def:just}
A \emph{justification} for round $s$ is a triple $J = (vr, C, D)$ with $-1 \leq vr < s$, where $C$ is a certificate for some value identifier $x_J$ in round $vr$ if $vr \geq 0$ and $C = \bot$ if $vr = -1$, and $D$ contains one dead certificate for every round $j$ with $vr < j < s$.
\end{definition}

Algorithm~\ref{alg:justify} gives Justify in full. It removes Walk's recorded state variables $\validRound_p$ and $\validValueId_p$, $\mathit{WaitForValid}$ and the observation rule. The exit rule, decision rule and proposal timeout are unchanged.

To build a justification for round $r$, the proposer selects the largest round $vr<r$ for which it holds a support certificate. It chooses any such certificate $C$ if several exist at $vr$. If it holds none, it uses $vr=-1$ and $C=\bot$. For each intervening round $j$ with $vr<j<r$, it adds to $D$ the quorum on which it left $j$. That quorum contains no support certificate, since otherwise the proposer would have selected a larger $vr$. It is therefore a dead certificate by Lemma~\ref{lem:exit}.

The proposal is $\langle\PROPOSAL,r,x,J\rangle$. It carries a fresh full value $x$ when $vr=-1$, and the certified identifier $x_J$ otherwise. A recipient accepts only if $J$ is a justification for its current round and the payload passes the corresponding check: $\valid(x)$ for a fresh value, or $x=x_J$ for a certified identifier. This single acceptance rule replaces Walk's fresh-proposal and reproposal rules.

\begin{algorithm}[htp]
\caption{Justify, one height, process $p$. Blue lines differ from Algorithm~\ref{alg:walk}. A proposal carries its justification; there is no recorded valid state and no $\mathit{WaitForValid}$. Fresh proposals carry a full value; certified reproposals carry an identifier, which is precommitted directly. Rule~\ref{ln:jexit} applies at any time. Messages of rounds above $\round_p$ are kept and take effect when $\round_p$ reaches their round.}
\label{alg:justify}
\small
\begin{algorithmic}[1]
\State $\round_p := 0$; $\step_p := \propose$; $\decision_p := \nil$
\Upon{start}
  \State $\mathit{StartRound}(0)$
\EndUpon
\Function{StartRound}{$r$}
  \State $\round_p \gets r$; $\step_p \gets \propose$
  \If{$\proposer(r) = p$}
    \State \new{$vr \gets$ the largest round $j<r$ with a held certificate $C_j$, or $-1$; choose any $C_j$ if several; $C_{-1}:=\bot$}
    \State \new{$D \gets$ for each round $j$ with $vr < j < r$, the quorum on which $p$ left round $j$}
    \State \new{$x \gets$ the identifier certified by $C_{vr}$ \textbf{if} $vr \geq 0$ \textbf{else} $\mathit{getValue}()$}
    \State \textbf{if} $\round_p\neq r\vee\step_p\neq\propose\vee\decision_p\neq\nil$ \textbf{then return}
    \State \new{broadcast $\langle\PROPOSAL, r, x, (vr, C_{vr}, D)\rangle$} \label{ln:jbroadcast}
  \Else
    \State schedule $\mathit{OnTimeoutPropose}(r)$ after $\tprop(r)$
  \EndIf
\EndFunction
\Upon{\new{$\langle\PROPOSAL, \round_p, x, J\rangle$ from $\proposer(\round_p)$ while $\step_p = \propose$}} \label{ln:jprop}
  \State \new{\textbf{if} $J$ is a justification for round $\round_p$, $J.vr=-1$, and $x$ is a full value with $\valid(x)$ \textbf{then} precommit $\id(x)$}
  \State \new{\textbf{else if} $J$ is a justification for round $\round_p$, $J.vr\geq0$, and $x=x_J$ \textbf{then} precommit $x$ \textbf{else} precommit $\nil$}
  \State $\step_p \gets \pcstep$
\EndUpon
\Upon{a quorum of round $\round_p$, for the first time in this round} \label{ln:jexit}
  \State \textbf{if} $\step_p = \propose$ \textbf{then} precommit $y$; $\step_p \gets \pcstep$, where $y = \id(x)$ for a value $x$ with $2f+1$ precommits in the quorum, chosen deterministically if several, and $y = \nil$ if there is none \Comment{exit precommit}
  \State \textbf{if} $\decision_p = \nil$ \textbf{then} $\mathit{StartRound}(\round_p+1)$
\EndUpon
\Upon{$n-f$ precommits $\langle\PRECOMMIT, s, \id(v)\rangle$ from distinct processes, for some round $s$, and $v$ is known} \label{ln:jdecide}
  \State $\decision_p \gets v$
\EndUpon
\Function{OnTimeoutPropose}{$r$}
  \State \textbf{if} $r = \round_p \wedge \step_p = \propose$ \textbf{then} precommit $\nil$; $\step_p \gets \pcstep$
\EndFunction
\end{algorithmic}
\end{algorithm}

A correct proposer can build $J$ as soon as it enters round $r$. It has retained the quorum on which it left each earlier round. After it selects $vr$, each retained exit quorum for $vr<j<r$ supplies the required dead certificate. No additional messages are needed to construct the justification.

\paragraph{Agreement from the justification.}
Suppose $v$ has a decision certificate in round $r$. Any justification for a later round must have $vr\geq r$: choosing $vr<r$ would require a dead certificate for round $r$, which cannot coexist with the decision certificate. If $vr=r$, its support certificate is for $v$. If $vr>r$, a certificate for a conflicting value would require earlier correct precommits for that value in rounds greater than $r$. Tracing the first conflicting precommit back to its justification, or to earlier matching precommits for an exit vote, gives the same contradiction. The full proof appears in Appendix~\ref{app:justify}.

The recipient can therefore check the proposal using its attached justification. Acceptance does not depend on the recipient's recorded valid state.

\paragraph{Proposing without a synchronization wait.}
A correct proposer already holds the evidence needed to construct its justification. It can broadcast immediately, and recipients can check that evidence even if they have learned different certificates. The exit precommit ensures that a process receiving a quorum before the proposal still participates before advancing. Justify retains $\tprop$ for silent proposers, but needs neither $\mathit{WaitForValid}$ nor the post-quorum wait.

\paragraph{Correctness, latency and message size.}
At $n \geq 5f+1$ and at most $f$ Byzantine processes, Justify satisfies Agreement, Validity and Termination (Appendix~\ref{app:justify}); Agreement and Validity use no timing assumption.

Consider a round $r$ with a correct proposer and $\tprop(r)>2\Delta$, whose first correct entry occurs after $\GST$. Every correct process decides within $3\Delta$ of that first entry. If the proposer enters the round before deciding, it broadcasts upon entry and every correct process decides within $2\Delta$ of the broadcast. These are the two communication steps of Fast Tendermint; the additional $\Delta$ in the first-entry bound allows for the proposer's later entry (Theorem~\ref{thm:responsive}).

A proposal carries at most one support certificate and one dead certificate per intervening round. A correct proposer uses its exit quorums for rounds strictly between $vr$ and $r$. A process already in one of those rounds can use the attached quorums to catch up; a process at or below $vr$ still needs the missing round quorums. An arbitrary valid justification may contain smaller dead certificates. Those suffice to justify acceptance, but may not suffice for quorum-based catch-up.

\section{Choosing a construction}
\label{sec:discussion}

Walk is the natural starting point when preserving Tendermint's local valid state and existing proposal format matters most. Justify is the alternative when eliminating the proposer's synchronization wait is worth carrying more evidence. Both retain per-height decisions and ordinary precommits, so the choice does not require adopting a chained finality interface.

\paragraph{Advancing on fewer precommits.}
The main constructions retain the published $n-f$ exit threshold. Appendix~\ref{app:fastexit} separately shows how Walk and Justify can advance on $2f+1$ matching precommits, or on a dead certificate that may be larger. This optional extension changes neither the decision threshold nor the single-precommit signing rule. Its relation to Minimmit's notarizations and nullifications is discussed there, so the main contribution remains the round-change repair.

\paragraph{Extension to chains.}
In the per-height constructions, a proposer that selects a certified value reproposes it rather than extending it. Appendix~\ref{sec:chains} instead uses the certified block as the parent of a new block. In this extension of Walk and Justify, Agreement requires the committed sequences at correct processes to be prefixes of a single chain, as in the use of notarizations by Minimmit and Simplex. The safety argument uses the property shared by both repairs: a correct process holds exit evidence for every round it passes. Continuing state-machine replication also requires the data-availability and execution conditions stated in that appendix.

\paragraph{Other round-change designs.}
Appendix~\ref{app:alternatives} sketches three other ways to retain round skipping: broaden the observation rule, jump on a support certificate after a timer, or carry recorded certificates in precommits. These sketches explain the design choices; Walk and Justify remain the two constructions specified and proved here.

\section{Conclusion}

The published protocol allows processes to advance past a round without recording support for a value that may have been decided there. Walk repairs this by requiring evidence from each round and a precommit before leaving on the current round's quorum. Justify carries the evidence in the proposal and removes the proposer's synchronization wait. Both preserve Agreement and Validity under asynchrony and Termination under the original partially synchronous gossip model. Both retain the proposal timeout.

\section*{Acknowledgments}
We thank Daniel Cason and Preston Vander Vos for reproducing the execution of Theorem~\ref{thm:violation} by modeling the published pseudocode in Quint.

\clearpage
\appendix
\numberwithin{theorem}{section}

\section{The published algorithm}
\label{app:ft}

Algorithm~\ref{alg:ft} restates Algorithm~1 of~\cite{ft} in our notation, line for line, with the original line numbers. The published pair $\mathit{valid}_p = (\round, \id)$ is written $\validPair{p}$, and the published predicate $\mathit{validate}$ is written $\valid$.

\setcounter{algorithm}{0}
\begin{algorithm}[htp]
\caption{Algorithm~1 of~\cite{ft}, restated with its original line numbers. Fresh proposals carry a full value $v$; reproposals carry an identifier $\id(v)$.}
\label{alg:ft}
\footnotesize
\begin{algorithmic}[1]
\State $\round_p := 0$ \label{ft:1}
\State $\step_p \in \{\propose, \pcstep\}$ \label{ft:2}
\State $\decision_p := \nil$ \label{ft:3}
\State $\validPair{p} := (-1, \nil)$ \label{ft:4}
\Function{StartRound}{$\round$} \label{ft:5}
  \State $\round_p \gets \round$ \label{ft:6}
  \State $\step_p \gets \propose$ \label{ft:7}
  \If{$\proposer(\round_p) = p$} \label{ft:8}
    \If{$\round_p > 0$} \label{ft:9}
      \State $\mathit{WaitForValid}(\tprec(\round_p))$ \label{ft:10}
    \EndIf
    \If{$\validValueId_p = \nil$} \label{ft:11}
      \State $\mathit{proposal} \gets \mathit{getValue}()$ \label{ft:12}
    \Else \label{ft:13}
      \State $\mathit{proposal} \gets \validValueId_p$ \label{ft:14}
    \EndIf
    \State broadcast $\langle\PROPOSAL, \round_p, \mathit{proposal}, \validRound_p\rangle$ \label{ft:15}
  \Else \label{ft:16}
    \State schedule $\mathit{OnTimeoutPropose}(\round_p)$ after $\tprop(\round_p)$ \label{ft:17}
  \EndIf
\EndFunction
\Upon{$\langle\PROPOSAL, \round_p, v, -1\rangle$ from $\proposer(\round_p)$ while $\step_p = \propose$} \label{ft:18}
  \If{$\valid(v) \wedge (\validRound_p = -1 \vee \validValueId_p = \id(v))$} \label{ft:19}
    \State broadcast $\langle\PRECOMMIT, \round_p, \id(v)\rangle$ \label{ft:20}
  \Else \label{ft:21}
    \State broadcast $\langle\PRECOMMIT, \round_p, \nil\rangle$ \label{ft:22}
  \EndIf
  \State $\step_p \gets \pcstep$ \label{ft:23}
\EndUpon
\Upon{$\langle\PROPOSAL, \round_p, \id(v), vr\rangle$ from $\proposer(\round_p)$ with $0 \leq vr < \round_p$ and $2f+1$ $\langle\PRECOMMIT, vr, \id(v)\rangle$ while $\step_p = \propose$} \label{ft:24}
  \If{$\validRound_p \leq vr \vee \validValueId_p = \id(v)$} \label{ft:25}
    \If{$\validRound_p \leq vr$} \label{ft:26}
      \State $\validPair{p} \gets (vr, \id(v))$ \label{ft:27}
    \EndIf
    \State broadcast $\langle\PRECOMMIT, \round_p, \id(v)\rangle$ \label{ft:28}
  \Else \label{ft:29}
    \State broadcast $\langle\PRECOMMIT, \round_p, \nil\rangle$ \label{ft:30}
  \EndIf
  \State $\step_p \gets \pcstep$ \label{ft:31}
\EndUpon
\Upon{$2f+1$ $\langle\PRECOMMIT, \round_p, \id(v)\rangle$ with $\round_p > \validRound_p$} \label{ft:32}
  \State $\validPair{p} \gets (\round_p, \id(v))$ \label{ft:33}
\EndUpon
\Upon{$n-f$ $\langle\PRECOMMIT, r, \ast\rangle$ for the first time with $r \geq \round_p$} \label{ft:34}
  \State schedule $\mathit{OnTimeoutPrecommit}(r)$ after $\tprec(r)$ \label{ft:35}
\EndUpon
\Upon{$\langle\PROPOSAL, r, v, -1\rangle$ from $\proposer(r)$ and $n-f$ $\langle\PRECOMMIT, r', \id(v)\rangle$} \label{ft:36}
  \State $\decision_p \gets v$ \label{ft:37}
\EndUpon
\Function{WaitForValid}{$\mathit{timeout}$} \Comment{run by the proposer of rounds $> 0$} \label{ft:38}
  \While{$\validRound_p < \round_p - 1 \wedge \neg\, \mathit{timeout.elapsed}()$} \label{ft:39}
    \If{$2f+1$ $\langle\PRECOMMIT, r, \id(v)\rangle$ with $r > \validRound_p$} \label{ft:40}
      \State $\validPair{p} \gets (r, \id(v))$ \label{ft:41}
    \EndIf
  \EndWhile
\EndFunction
\Function{OnTimeoutPropose}{$\round$} \label{ft:42}
  \If{$\round = \round_p \wedge \step_p = \propose$} \label{ft:43}
    \State broadcast $\langle\PRECOMMIT, \round_p, \nil\rangle$ \label{ft:44}
    \State $\step_p \gets \pcstep$ \label{ft:45}
  \EndIf
\EndFunction
\Function{OnTimeoutPrecommit}{$\round$} \label{ft:46}
  \If{$\round \geq \round_p \wedge \decision_p = \nil$} \label{ft:47}
    \State $\mathit{StartRound}(\round + 1)$ \label{ft:48}
  \EndIf
\EndFunction
\end{algorithmic}
\end{algorithm}

\section{Evidence lemmas}
\label{app:evidence}

\begin{lemma}
\label{lem:evidence}
Let $T$ be a set of precommits of round $s$.
\begin{enumerate}[nosep,label=(\roman*)]
\item If $T$ contains a certificate for $x$, then no value other than $x$ obtains $n-f$ precommits from distinct processes in round $s$.
\item If $T$ is a dead certificate, then no value obtains $n-f$ precommits from distinct processes in round $s$.
\item If $T$ is a quorum all of whose precommits are for $\nil$, then no value obtains $2f+1$ precommits from distinct processes in round $s$.
\item If some value $v$ obtains $n-f$ precommits from distinct processes in round $s$, then no value other than $v$ has a certificate in round $s$, and no dead certificate for round $s$ exists.
\end{enumerate}
\end{lemma}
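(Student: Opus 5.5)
All four parts follow from one quorum-intersection count. If $A$ and $A'$ are sets of senders of precommits of round $s$, then at least $|A|+|A'|-n$ processes lie in both, and at most $f$ of those are Byzantine. So if $|A|+|A'|-n\geq f+1$, some correct process lies in both. Correct processes send at most one precommit per round, so that process's precommits in $A$ and $A'$ coincide. I would state this once and apply it four times, always taking $A'$ to be the signer set of the hypothetical $n-f$ precommits for a value.

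For (i), let $A$ be the $2f+1$ signers of the certificate for $x$ in $T$. Let $A'$ be $n-f$ distinct signers of precommits for $\id(y)$ with $y\neq x$. Then $|A|+|A'|-n = f+1$, so a correct process precommitted both $\id(x)$ and $\id(y)$ in round $s$, a contradiction. Collision resistance of $\id$ makes $\id(x)\neq\id(y)$.

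For (ii), suppose $v$ obtains $n-f$ precommits from a signer set $A'$. The definition of a dead certificate applied with $x=v$ gives a set $A$ of at least $2f+1$ senders in $T$ whose precommits are not for $\id(v)$. The same count yields a correct process in $A\cap A'$ that precommitted $\id(v)$ and something else in round $s$.

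For (iii), the nil quorum has $n-f$ signers. Any $2f+1$ signers for a value intersect it in at least $f+1$ processes, so one of them is correct and precommitted both $\nil$ and a value. Alternatively, I would note that a nil quorum is a dead certificate with $n-f\geq 4f+1$ non-matching senders per value and count directly; the direct count is just as short.

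For (iv), the first claim follows from the argument of (i): a certificate for $y\neq v$ and the $n-f$ precommits for $v$ share a correct signer. The second claim is the contrapositive of (ii).

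The only subtlety, and the step I would check most carefully, is that $T$ is a precommit set of distinct senders while Byzantine processes may sign conflicting precommits in round $s$. The argument must use only correct signers in the intersection, never Byzantine ones. It must also read "obtains $n-f$ precommits" as counting distinct senders over all messages ever sent, not only those inside $T$. The intersection bound handles both, since each side counts distinct senders and we subtract all $f$ Byzantine processes. None of these steps needs $n\geq5f+1$; $n>3f$ already suffices.
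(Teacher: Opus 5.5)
Your proof is correct and follows the paper's argument. Parts (i), (ii) and (iv) are the same $(2f+1)+(n-f)-n=f+1$ intersection, yielding a correct signer who cannot precommit twice in round $s$. For (iii), the paper uses the direct count you offer as an alternative: a sender of $\id(x)$ is either one of the $f$ processes outside $T$ or a Byzantine member of $T$, so there are at most $2f$. This is equivalent to your intersection argument.
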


\begin{proof}
(i) Any $2f+1$ processes and any $n-f$ processes share at least $(2f+1)+(n-f)-n = f+1$ processes, one of which is correct; a correct process precommits once per round.

(ii) Fix $x$. At least $2f+1$ senders in $T$ precommit something other than $\id(x)$ in round $s$, at least $f+1$ of them correct, and a correct process sends one precommit per round, so those never precommit $\id(x)$ in round $s$; hence at most $n-f-1$ processes precommit $\id(x)$.

(iii) A process with a precommit for $\id(x)$ in round $s$ is outside $T$, and there are $n-(n-f) = f$ such processes, or it is a Byzantine member of $T$ that also sent $\nil$, and there are at most $f$ of those; so at most $2f$.

(iv) The first claim is (i). For the second, a dead certificate contradicts (ii).

Part (iii) does not extend to a quorum in which some value has a few precommits: a Byzantine member of $T$ that precommitted $\nil$ in $T$ may also have precommitted $\id(x)$, so $x$ can have up to $a_x(T) + 2f$ precommits, and with $a_x(T) = 1$ a certificate is possible.
\end{proof}

\begin{lemma}[round progress]
\label{lem:exit}
Let $n \geq 5f+1$. If at least $n-f$ distinct processes have precommitted in round $s$, then the set of their precommits contains a certificate for some value or is a dead certificate.
\end{lemma}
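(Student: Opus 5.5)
We argue by cases on the most-supported value, working directly with the definitions of Definition~\ref{def:certs}. Fix $n-f$ distinct processes that precommitted in round $s$. For each of them pick one of its round-$s$ precommits; the result is a precommit set $T$ of round $s$ with $|T| = n-f$. In other words, $T$ is a quorum. Any fixed choice works for a Byzantine sender with several precommits. Correct senders have exactly one, so nothing needs checking for them.

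\emph{Case 1: some value $x$ has $a_x(T) \geq 2f+1$.} Then any $2f+1$ of the precommits in $T$ for $\id(x)$ form a support certificate for $x$ in round $s$, contained in $T$. This settles the case.

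\emph{Case 2: every value $x$ has $a_x(T) \leq 2f$.} Then for every $x$,
\[
|T| - a_x(T) \;\geq\; (n-f) - 2f \;=\; n-3f \;\geq\; 2f+1,
\]
where the last inequality uses $n \geq 5f+1$. This is exactly the defining condition of a dead certificate, so $T$ is one. This matches the remark after Definition~\ref{def:certs} that a quorum with no $2f+1$-supported value is a dead certificate.

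The only subtlety is the phrase ``the set of their precommits''. If a Byzantine sender contributed several precommits, the raw set is not a precommit set. I would read the statement as asserting that a precommit set selected from these precommits has the property. The selection made at the start supplies such a set. Either case then yields the claim, and no step is a genuine obstacle beyond the single inequality $n-3f\geq 2f+1$.
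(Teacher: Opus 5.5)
Your proof is correct and is essentially the paper's argument: take one precommit per sender, and if no value reaches $2f+1$ of them, then every $x$ has at least $n-f-2f \geq 2f+1$ senders precommitting something else. The only difference is cosmetic: you fix exactly $n-f$ senders, whereas the paper keeps all of them, but your Case~2 inequality holds verbatim with $|T| \geq n-f$, so nothing is lost.
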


\begin{proof}
Take one precommit per sender. If no value has $2f+1$ of them, then for every $x$ at least $n-f-2f \geq 2f+1$ of the senders precommit something other than $\id(x)$.
\end{proof}

\section{Correctness of Walk}
\label{app:walk}

\subsection{Agreement and Validity}

Three properties of Algorithm~\ref{alg:walk} are used throughout. First, $\validRound_p \leq \round_p$ while $p$ has not decided, and $\validRound_p = \round_p$ only through Rule~\ref{ln:obs} with a certificate of the current round: line~\ref{ln:wfv} sets $\validRound_p$ to at most $r-1$ and Rule~\ref{ln:repro} sets it to $vr < \round_p$.

Second, every non-nil $\validValueId_p$ is backed by a certificate held by $p$ in round $\validRound_p$.

Third, after initialization, $\mathit{StartRound}$ is called only by Rule~\ref{ln:exit}, with $\round_p+1$, so a process reaches a round above $r$ only by leaving round $r$ on a quorum of round $r$.

\begin{theorem}
\label{thm:walk-safe}
Let $n \geq 5f+1$ and at most $f$ Byzantine processes. Walk satisfies Agreement.
\end{theorem}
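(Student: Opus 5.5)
We fix the earliest round $r$ in which some value $v$ obtains a decision certificate, that is, $n-f$ precommits $\langle\PRECOMMIT,r,\id(v)\rangle$ from distinct processes. Since a correct decision requires such a certificate in some round, it suffices to show that no value $w\neq v$ obtains $n-f$ matching precommits in any round $s\geq r$. For $s=r$ this is Lemma~\ref{lem:evidence}(iv). For $s>r$ we prove a stronger statement by induction on $s$. Let $P(s)$ be the conjunction of three invariants, each quantified over all correct processes:
\begin{enumerate}[nosep,label=(I\arabic*)]
\item Any correct process with $\round_p>r$ (or in round $r$ past observation) has $\validRound_p\geq r$ and $\validValueId_p=\id(v)$ whenever $\validRound_p\geq r$.
\item No correct process precommits $\id(w)$ with $w\neq v$ in any round in $(r,s]$.
\item No value $w\neq v$ has a certificate in any round in $[r,s]$.
\end{enumerate}
Part (I3) for round $r$ itself is again Lemma~\ref{lem:evidence}(iv).

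\textbf{Base step.} The base concerns leaving round $r$. By the third structural property, a correct process reaches round $r+1$ only via Rule~\ref{ln:exit} on a quorum $Q$ of round $r$. That quorum shares at least $(n-f)+(n-f)-n-f=n-3f\geq 2f+1$ senders with the decision certificate, counting only correct ones who precommit once. So $Q$ contains a certificate for $v$. Since Rule~\ref{ln:obs} has priority, and since $\validRound_p\leq r$ before this point (the first structural property), the process records $(r,\id(v))$ before leaving. It cannot have already recorded another value at round $r$, because by (I3) no other value has a certificate in round $r$. Hence (I1) holds upon entry to round $r+1$.

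\textbf{Inductive step.} Assume $P(s-1)$ and consider correct precommits in round $s$. We check each rule in turn.
\begin{itemize}[nosep]
\item \emph{Fresh-proposal rule.} With $\validRound_p\geq r$ and $\validValueId_p=\id(v)$, the process precommits only $\id(v)$ or $\nil$.
\item \emph{Reproposal rule.} Accepting $\id(w)$ with $w\neq v$ needs either $\validRound_p\leq vr$ with a certificate for $w$ in round $vr$, or $\validValueId_p=\id(w)$. The latter contradicts (I1). In the former case $vr\geq\validRound_p\geq r$ and $vr<s$, contradicting (I3) for round $vr$.
\item \emph{Exit precommit.} It needs a certificate for $w$ in round $s$. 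That certificate has at least $f+1$ correct signers, whose precommits were cast earlier by one of the other rules. So we argue about the first correct precommit for $w$ in round $s$ in real time, which must come from the proposal rules, already excluded.
\end{itemize}
This yields (I2) for round $s$. Then (I3) for round $s$ follows because a certificate needs $f+1$ correct precommits. (I1) is preserved: the updates via Rule~\ref{ln:obs}, line~\ref{ln:wfv} and Rule~\ref{ln:repro} all set $\validRound_p$ to some round $\geq r$ using a certificate, which by (I3) must be for $v$. Updates only ever raise $\validRound_p$, except that adoption in $\mathit{WaitForValid}$ also requires $s>\validRound_p$. A decision certificate for $w$ in round $s>r$ would in particular be a certificate, which (I3) excludes. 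Agreement follows. Separately, processes that decide in round $r$ must agree with $v$ because $v$ is unique in round $r$ by Lemma~\ref{lem:evidence}(iv).

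\textbf{Main obstacle.} The delicate point is the circularity between the invariants. A certificate in round $s$ may be assembled from precommits that are themselves exit precommits justified by certificates in the same round $s$. Records in (I1) can also be updated mid-round by certificates of rounds up to $s$. I would resolve this by ordering all correct precommits and record-updates globally in time and taking the first violating event. That event cannot be an exit precommit, since it would need $f+1$ earlier correct conflicting precommits. It also cannot be a record update, since that would need an earlier conflicting certificate. So it reduces to the proposal-rule cases above. Care is also needed with Byzantine double signing when counting certificates, which is handled by always counting the $f+1$ correct signers.
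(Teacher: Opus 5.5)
Your proposal is correct and, once you switch from the round-by-round induction to the global first-violating-event ordering you describe under ``Main obstacle'', it is essentially the paper's proof. It uses the same three jointly maintained invariants on recorded state, later correct precommits and later certificates, the same $n-3f\geq 2f+1$ count forcing Rule~\ref{ln:obs} to record $(r,\id(v))$ before the exit from round $r$, and the same case analysis over the proposal, exit and record-update rules; fixing the earliest decided round is equivalent to the paper's ``exchange the roles'' step.
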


\begin{proof}
Fix an execution and let $v$ be a value that, at some point, has received precommits from a set $V$ of $n-f$ distinct processes in round $r$.

Counting in round $r$. The facts below concern the set of all precommits of round $r$ ever sent, so they hold throughout the execution. A process with a precommit for $\id(x) \neq \id(v)$ in round $r$ is either outside $V$, and there are $f$ such processes, or a Byzantine member of $V$, and there are at most $f$ of those. So $x$ has at most $2f$ precommits from distinct processes in round $r$: no certificate for $x \neq v$ ever exists in round $r$.

By the same count, any quorum of round $r$ has at most $2f$ members whose precommit is not for $\id(v)$, hence at least $n-3f \geq 2f+1$ precommits for $\id(v)$: every quorum of round $r$ contains a certificate for $v$, and no quorum of round $r$ contains $2f+1$ precommits for another value.

Consider the following three statements about a prefix of the execution.
\begin{enumerate}[nosep,label=(I\arabic*)]
\item Every correct process $p$ with $\round_p>r$ that has not decided has $\validRound_p\geq r$ and $\validValueId_p=\id(v)$.
\item No correct process has sent $\langle\PRECOMMIT, s, \id(x)\rangle$ with $s > r$ and $x \neq v$.
\item For every round $s \geq r$ and every $x \neq v$, fewer than $2f+1$ distinct processes have sent $\langle\PRECOMMIT, s, \id(x)\rangle$.
\end{enumerate}
All three hold at the start of the execution, when every process is in round $0 \leq r$ and nothing has been sent. We show that no event is the first to break any of them. Events are the sends of all processes and the state changes of correct processes, ordered by time.

\emph{First violation of (I2).} A correct $p$ with $\round_p = s > r$ sends a precommit for $\id(x)$, $x \neq v$. By (I1), $\validPair{p} = (s',\id(v))$ with $s' \geq r$.

Rule~\ref{ln:fresh} needs $\validRound_p = -1$ or $\validValueId_p = \id(x)$; both fail. Rule~\ref{ln:repro} needs a certificate for $x$ in a round $vr$ and either $\validValueId_p = \id(x)$, which fails, or $\validRound_p \leq vr$, which gives $vr \geq r$ and a certificate for $x$ in a round at or above $r$, contradicting (I3).

An exit precommit for $\id(x)$ (line~\ref{ln:exitvote}) needs $2f+1$ precommits for $\id(x)$ in a quorum of round $s \geq r$, that is, a certificate for $x$ in round $s$, contradicting (I3). $\mathit{OnTimeoutPropose}$ sends $\nil$.

\emph{First violation of (I3).} Some process becomes the $(2f+1)$-th distinct sender of a precommit for $\id(x)$ in a round $s \geq r$, $x \neq v$. If $s = r$ this contradicts the counting. If $s > r$, at least $f+1$ of the $2f+1$ senders are correct. If the send of such a correct sender is an earlier event, it contradicts (I2), which holds before this event; if it is this very event, the event breaks (I2), which the previous case excludes.

\emph{First violation of (I1).} Either a correct $p$ moves to a round above $r$ without satisfying both $\validRound_p\geq r$ and $\validValueId_p=\id(v)$, or such a $p$ updates its valid state so that one of these conditions fails. By the third property, $p$ moves above $r$ by leaving round $r$ on a quorum $Q$ of round $r$ through Rule~\ref{ln:exit}. $Q$ contains a certificate for $v$, so by the time $p$ holds $Q$ it holds $2f+1$ precommits for $\id(v)$ in its current round $r$, and Rule~\ref{ln:obs}, which is applied before Rule~\ref{ln:exit}, has set $\validPair{p} \gets (r,\id(v))$ unless $\validRound_p = r$ already; in that case the valid state was set by Rule~\ref{ln:obs} from a certificate of round $r$, which is for $v$. So $p$ leaves round $r$ with $\validPair{p} = (r, \id(v))$.

An update to the valid state at a process above $r$ comes from Rule~\ref{ln:obs} with a certificate for $x$ in round $\round_p > r$, from Rule~\ref{ln:repro} with a certificate in a round $vr \geq \validRound_p \geq r$, or from line~\ref{ln:wfv} with a certificate in a round above $\validRound_p \geq r$; by (I3), which holds before the event, each of these certificates is for $v$.

Thus (I1) to (I3) hold throughout the execution. If a correct process decides $w$ on $n-f$ precommits of a round $s \geq r$, then $n-f \geq 2f+1$ distinct processes precommitted $\id(w)$ in round $s$, and (I3) gives $w = v$. If $s < r$, exchange the roles of the two decisions.
\end{proof}

\begin{theorem}
\label{thm:walk-valid}
Let at most $f$ processes be Byzantine. Walk satisfies Validity.
\end{theorem}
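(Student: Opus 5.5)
The plan is to show that every value that gets decided was, at some point, the payload of a fresh proposal that a correct process accepted. Since a fresh proposal is accepted only after the check $\valid(v)$, the decided value must satisfy $\valid$. A decision on $v$ in Rule~\ref{ln:decide} requires $n-f$ precommits $\langle\PRECOMMIT,s,\id(v)\rangle$ from distinct processes. Because $n-f\geq f+1$, at least one of these comes from a correct process. Collision resistance of $\id$ lets us identify the decided value $v$ with the value whose identifier those processes precommitted. So it suffices to prove the following invariant: whenever a correct process sends a precommit for $\id(x)$, some value with identifier $\id(x)$ satisfies $\valid$. Since $\valid$ is a deterministic predicate on the value and $\id$ is collision resistant, that value is $x$.

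I will prove the invariant, together with a companion statement, by induction over the order of events, in the style of the first-violation argument of Theorem~\ref{thm:walk-safe}. The companion statement is: every certificate held by a correct process (at least $2f+1$ precommits for $\id(x)$ in one round), and every non-nil $\validValueId_p$, refers to an identifier $\id(x)$ with $\valid(x)$. I then take the first correct non-nil precommit that breaks the invariant and split by the rule that sent it.
\begin{enumerate}[nosep,label=(\roman*)]
\item Rule~\ref{ln:fresh} checks $\valid(v)$ directly.
\item Rule~\ref{ln:repro} requires a certificate for $v$ in round $vr$. Of its $2f+1$ senders, at least $f+1$ are correct, and their earlier precommits satisfy the invariant.
\item An exit precommit (line~\ref{ln:exitvote}) likewise requires $2f+1$ precommits for $\id(x)$ in one round, so at least one comes from an earlier correct precommit.
\item $\mathit{OnTimeoutPropose}$ sends only $\nil$.
\end{enumerate}
The companion statement about $\validValueId_p$ follows the same way. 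It is set only from certificates, in Rule~\ref{ln:obs}, Rule~\ref{ln:repro} and line~\ref{ln:wfv}, and each certificate contains a correct precommit that was sent earlier. This matches the second structural property noted at the start of this appendix.

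The one step needing care is that the induction is well-founded. A correct precommit in cases (ii) and (iii) depends only on correct precommits sent strictly earlier, so a minimal counterexample cannot exist. The base case is a correct precommit that relies on no certificate, which is exactly case (i). No timing assumption is used, and the argument does not depend on $n\geq5f+1$ beyond $2f+1>f$ and $n-f>f$. I expect the main subtlety to be confirming that a reproposal carries only an identifier. The correct precommitter never checks $\valid$ on that identifier itself, so validity must be inherited through the certificate chain. That inheritance is exactly what the invariant over identifiers provides.
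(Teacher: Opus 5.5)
Your proof is correct and takes essentially the paper's approach: the earliest correct precommit for the decided identifier cannot come from a reproposal or an exit precommit, since each needs a certificate containing an earlier correct precommit for the same identifier, so it comes from a fresh-proposal acceptance that checked $\valid$, and $n-f\geq f+1$ puts a correct precommit in every decision. Your companion invariant on certificates and $\validValueId_p$ is harmless but unnecessary, because the reproposal rule already requires the certificate in its own precondition.
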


\begin{proof}
Let $v$ be decided and consider the earliest precommit for $\id(v)$ sent by a correct process. It was not sent by Rule~\ref{ln:repro}, which needs a certificate for $v$ in an earlier round and hence an earlier correct precommit for $\id(v)$. It was not an exit precommit, which needs $2f+1$ precommits for $\id(v)$ in a quorum of the same round and hence $f+1$ earlier correct precommits for $\id(v)$. So it was sent by Rule~\ref{ln:fresh}, which checked $\valid(v)$. A decision needs $n-f \geq f+1$ precommits for $\id(v)$, at least one from a correct process.
\end{proof}

\subsection{Termination}
\label{sec:walk-termination}

Let $\tprop(r)$ and $\twait(r)$ be finite, positive and nondecreasing in $r$. As in~\cite{ft}, assume that every sufficiently large round $r$ satisfies
\[
\twait(r)>2\Delta,\qquad \tprop(r)>\twait(r)+2\Delta.
\] We write ``a process is in round $s$'' for a correct process that has not decided and has $\round_p = s$.

\begin{lemma}[catch-up]
\label{lem:catchup}
After $\GST$, if a correct process $p$ is in round $s$ at time $t$, then every correct process that has not decided is in a round at least $s$ by time $t+\Delta$.
\end{lemma}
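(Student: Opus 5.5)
The plan is to propagate, via the gossip property, the exit quorums that carried $p$ up to round $s$, and then let each other process walk through the remaining rounds by applying Rule~\ref{ln:exit} repeatedly, with no timer in between.

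\textbf{Step 1 (the quorums exist and are held).} If $s=0$ the claim is trivial, so let $s>0$. By the third property stated at the start of this appendix, $p$ reached round $s$ by leaving each round $j<s$ on a quorum $Q_j$ of round $j$ through Rule~\ref{ln:exit}. So for every $j<s$, $p$ has received at least $n-f$ precommits of round $j$ from distinct processes, and it received all of them by time $t$. Every message in $Q_j$ was therefore received by a correct process at some time $t_j\le t$.

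\textbf{Step 2 (gossip).} By the gossip property, every correct process $q$ receives every message of $\bigcup_{j<s}Q_j$ by $\max\{t_j,\GST\}+\Delta\le t+\Delta$; the inequality uses $t\ge\GST$. Since upon conditions are persistent predicates over retained messages, and messages of rounds above $\round_q$ are kept, these precommits remain available to $q$ as $q$ advances.

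\textbf{Step 3 (local walking).} Fix a correct $q$ that has not decided by $t+\Delta$. By that time $q$ holds a quorum of every round $j<s$. Suppose $\round_q=j<s$ at some moment at or after $q$ holds all these quorums. Then Rule~\ref{ln:exit} is enabled for round $j$. Either it has already fired in round $j$, which is impossible since $q$ is still in round $j$ and firing advances it whenever $\decision_q=\nil$, or it fires now. When it fires, $q$ casts an exit precommit if needed and calls $\mathit{StartRound}(j+1)$. Enabled rules are processed without delay. Rule~\ref{ln:exit} applies at any time, including while $q$ is waiting in $\mathit{WaitForValid}$, whose wait it aborts, and the round-entry re-evaluation immediately triggers the next exit. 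Hence $q$ climbs from round $j$ to round $s$ in zero time, and these steps are finitely many. If $q$ decides during the climb it is excluded from the claim. Therefore by $t+\Delta$, $q$ is in a round at least $s$.

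\textbf{Main obstacle.} The delicate step is Step 3's claim that the climb happens instantaneously and is never blocked. I must check three things. First, that $\mathit{WaitForValid}$ is genuinely preempted by Rule~\ref{ln:exit}; the caption of Algorithm~\ref{alg:walk} says so. Second, that no waiting step sits between rounds; there is none, since the precommit timer was removed. Third, that Rule~\ref{ln:obs} taking priority only adds an atomic state update and cannot stall the climb.

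I also need the gossip property to cover messages merely received, not sent, by $p$. That holds because $p$ is correct and the property is stated for receipt as well as sending.
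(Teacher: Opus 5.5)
Your proof is correct and follows the same argument as the paper. The exit quorums that carried $p$ past each round $j<s$ were received by $p$ by time $t$; gossip delivers them to every correct process by $t+\Delta$; and repeated, timer-free applications of the exit rule, which also preempts $\mathit{WaitForValid}$, lift any undecided process to round $s$ with no processing delay.
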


\begin{proof}
$p$ passed every round $j < s$ by Rule~\ref{ln:exit}, on a quorum $E_j$ of round $j$ it had received by $t$. By the gossip property every correct process $q$ holds every $E_j$ by $t+\Delta$. If $q$ is in a round $j < s$ at that time, then $E_j$ is a quorum of its current round, so Rule~\ref{ln:exit} fires and $q$ moves to round $j+1$; the rule applies during $\mathit{WaitForValid}$ too. Repeating, $q$ reaches round $s$. Processing takes no time in the model.
\end{proof}

\begin{lemma}[round-change synchrony]
\label{lem:sync}
After $\GST$, if a correct process leaves round $s$ at time $t$, then every correct process that has not decided is in a round above $s$ by $t+\Delta$.
\end{lemma}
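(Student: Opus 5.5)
The plan is to reduce this to the catch-up lemma (Lemma~\ref{lem:catchup}). Suppose a correct process $p$ leaves round $s$ at time $t \geq \GST$. In Walk the only way to leave a round is Rule~\ref{ln:exit}, so at time $t$ process $p$ holds a quorum $E_s$ of round $s$. Since $p$ has not decided, line~\ref{ln:exitstart} makes it call $\mathit{StartRound}(s+1)$ at time $t$, and processing takes no time. Hence $p$ is in round $s+1$ at time $t$, and Lemma~\ref{lem:catchup}, applied with round $s+1$ and time $t$, gives that every correct undecided process is in a round at least $s+1$ by $t+\Delta$.

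If the proof should not rely on $p$ remaining undecided, I would redo the catch-up argument directly. Process $p$ holds quorums $E_0,\dots,E_s$, one for each round it has left. By the gossip property every correct $q$ holds all of them by $t+\Delta$. If $q$ is undecided and in some round $j \leq s$, then $E_j$ is a quorum of its current round. Rule~\ref{ln:exit} therefore fires, and it cannot be blocked by $\mathit{WaitForValid}$ or by the step. So $q$ moves to round $j+1$ and repeats this until its round exceeds $s$.

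I also need to check that the rule's guard, ``for the first time in this round'', does not stop $q$. If $q$ has already used some quorum of round $j$, then it has already left round $j$. If it then decided, it is excluded from the claim. Otherwise it is already in a higher round, because rounds only increase.

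The expected obstacle is the edge case where $p$ leaves round $s$ by the exit rule but has decided just before. In that case $\mathit{StartRound}$ is not called, yet $p$ has still left round $s$. Given the convention that every handler requires $\decision_p=\nil$, the exit itself cannot fire after the decision, so a process that leaves a round is undecided at that moment and the reduction above applies.

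The last point to verify is that the gossip property applies to the quorums $p$ received. This holds because receipt by a correct process at time $t$ triggers delivery to all correct processes by $\max\{t,\GST\}+\Delta$.
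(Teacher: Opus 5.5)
Your proof is correct and is the paper's argument: the paper proves the lemma in one line, by applying Lemma~\ref{lem:catchup} to round $s+1$. Your extra checks are sound and only make explicit what the paper leaves implicit: the exit rule fires only while $\decision_p=\nil$, so the leaving process is undecided and in round $s+1$ at time $t$, and the ``first time in this round'' guard cannot block a lagging process.
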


\begin{proof}
Lemma~\ref{lem:catchup} with $s+1$.
\end{proof}

\begin{lemma}[every round ends]
\label{lem:ends}
No correct process remains undecided in a fixed round forever.
\end{lemma}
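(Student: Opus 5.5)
We argue by contradiction: suppose some correct process $p$ stays undecided in round $s$ forever. The goal is to show that eventually a quorum of round $s$ reaches $p$, so that Rule~\ref{ln:exit} moves it to round $s+1$. The argument has three steps.

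First, we show that every correct process that has not decided eventually reaches round $s$ and never goes beyond it. No correct process can leave round $s$. If some correct $q$ left $s$ at time $t$, it left on a quorum $E_s$ of round $s$ that it held. By the gossip property, $p$ receives $E_s$ by $\max\{t,\GST\}+\Delta$. That set is a quorum of $p$'s current round, so Rule~\ref{ln:exit} fires, contradicting the assumption that $p$ stays in $s$. For the converse direction, $p$ itself reached $s$ by leaving each round $j<s$ on a quorum $E_j$. These quorums are gossiped, and the argument of Lemma~\ref{lem:catchup} (which needs only eventual delivery, not $\GST$) brings every undecided correct process to round $s$. Since all correct processes sit at rounds at most $s$, none can be above it.

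Second, we show that every correct process in round $s$ eventually precommits in $s$. A non-proposer scheduled $\mathit{OnTimeoutPropose}(s)$ on entry. Since $\tprop(s)$ is finite and executions have finitely many events in bounded time, this timeout fires while the process is still in round $s$. If it is still in step $\propose$ at that point, it precommits $\nil$; otherwise it has already precommitted. The proposer of $s$ is the delicate case, because it schedules no proposal timeout. It runs $\mathit{WaitForValid}$, which terminates after the finite $\twait(s)$ or earlier; round $0$ skips this wait altogether. If the proposer has not been aborted, it passes the checks and broadcasts its proposal. Self-delivery then triggers Rule~\ref{ln:fresh} or Rule~\ref{ln:repro}. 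Both rules lead to a precommit, of a value or $\nil$, provided the reproposal's certificate is held. The proposer holds that certificate, because its $\validValueId$ is backed by a certificate it holds (the second property). The only way the check can fail is $\decision_p\neq\nil$, which is excluded by hypothesis, or $\step_p\neq\propose$, which means it has already precommitted. So every correct process that remains undecided in round $s$ precommits there.

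Third, we collect the precommits into a quorum. Decided correct processes need separate treatment. If some correct process has decided, it keeps disseminating the decision evidence and the value. Then $p$ eventually receives $n-f$ matching precommits of some round together with the value, so Rule~\ref{ln:decide} makes $p$ decide, a contradiction. If no correct process has decided, all $n-f$ correct processes are in round $s$ and each precommits there. By gossip, $p$ receives these $n-f$ precommits of round $s$ from distinct processes, which form a quorum of its current round. Rule~\ref{ln:exit} then fires and $p$ leaves round $s$, a contradiction.

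I expect the main obstacle to be the proposer case of the second step. One has to argue carefully from the atomicity conventions that the proposer cannot be stuck forever in step $\propose$ without ever broadcasting. One also has to handle decided processes cleanly, relying on the dissemination convention that decided processes keep forwarding the decision evidence and the full value.
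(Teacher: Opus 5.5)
Your proof is correct and takes essentially the same route as the paper. You first handle the case where some correct process decides, via dissemination of the decision evidence. You then use catch-up and the gossiped exit quorum to keep every undecided correct process in round $s$, and show that each of them precommits there: nonproposers by the proposal timeout, the proposer after its finite $\mathit{WaitForValid}$ and self-delivery of its own proposal. Finally, their $n-f$ precommits form a quorum of round $s$ that fires the exit rule. Your explicit check that the proposer's self-delivered reproposal is accepted, because its $\validValueId_p$ is backed by a certificate it holds, spells out what the paper states in one sentence.
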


\begin{proof}
If a correct process decides, its decision evidence and value reach every correct process by the gossip bound. Otherwise suppose a correct process remains in round $s$ forever. Catch-up places every correct process in a round at least $s$. If any process leaves $s$, round-change synchrony makes the remaining processes leave too, a contradiction. Thus all correct processes remain in $s$.

Each nonproposer eventually precommits on a proposal or its proposal timeout. If the proposer is correct, its finite $\mathit{WaitForValid}$ completes and it broadcasts and processes its own proposal.

Therefore every correct process precommits in $s$. Their at least $n-f$ precommits form a quorum of round $s$ that reaches every correct process, and Rule~\ref{ln:exit} makes every correct process leave $s$, again a contradiction.
\end{proof}

\begin{lemma}[proposer]
\label{lem:proposer}
Let $p = \proposer(r)$ be correct, $r > 0$, let $t_1 > \GST$ be the time at which the first correct process enters round $r$, and let $\twait(r) > 2\Delta$. Let $p$ broadcast a proposal for $x$ with advertised round $vr$ at time $t_p$. Then $t_p \leq t_1 + \Delta + \twait(r)$; every correct process $q$ that is in round $r$ at a time $t \geq t_p$ and has not yet precommitted in round $r$ has $\validRound_q \leq vr$ or $\validRound_q = r$ at time $t$; and if $vr \geq 0$ every correct process holds the certificate for $x$ in round $vr$ by $t_p + \Delta$.
\end{lemma}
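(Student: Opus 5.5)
The proof has three claims, taken in order: a bound on $t_p$, a constraint on the valid state of processes that have not yet precommitted, and dissemination of the reproposed certificate. For the bound, Lemma~\ref{lem:catchup} applies at $t_1$: every undecided correct process, $p$ included, is in a round at least $r$ by $t_1+\Delta$. Since $p$ broadcasts at all, it entered round $r$ and was still in round $r$ at $t_p$, so it entered by $t_1+\Delta$. $\mathit{WaitForValid}(r)$ lasts at most $\twait(r)$, after which the atomic check-and-broadcast happens immediately. Hence $t_p\le t_1+\Delta+\twait(r)$.

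The third claim is next, because the second uses it. If $vr\ge0$, then $\validValueId_p=x$ with $\validRound_p=vr$, and by the second standing property $p$ holds a certificate for $x$ in round $vr$ at $t_p$. The gossip property then delivers it to every correct process by $\max\{t_p,\GST\}+\Delta=t_p+\Delta$.

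The second claim is the main obstacle. Fix a correct $q$ in round $r$ at $t\ge t_p$ that has not precommitted in round $r$, and suppose $vr<\validRound_q<r$. I would show that $q$'s certificate for round $\validRound_q$ reached $p$ in time for $p$ to adopt it, so that $p$'s $vr$ would be larger, a contradiction. Take $s=\validRound_q$. The certificate in round $s<r$ was recorded by $q$ at some time $t'$, either via Rule~\ref{ln:obs} while $q$ was in round $s$ or via line~\ref{ln:wfv} or Rule~\ref{ln:repro} in a later round. The difficulty is bounding $t'$ relative to $p$'s wait. My first attempt would argue that every update of $\validRound_q$ to a round below $r$ happens while $q$ is in a round below $r$, or in round $r$ through Rule~\ref{ln:repro}. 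The latter makes $q$ precommit, which is excluded. Line~\ref{ln:wfv} in round $r$ is not available to $q$, since $q\ne p$ is not the proposer. So $t'$ precedes $q$'s entry into round $r$, which is at most $t_1+\Delta$ by catch-up, giving $t'\le t_1+\Delta$. Gossip delivers the certificate to $p$ by $\max\{t',\GST\}+\Delta\le t_1+2\Delta$, using $t_1>\GST$.

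It remains to show that $p$ is still waiting at $t_1+2\Delta$ and so adopts the certificate. Because $p$ entered at time $e\ge t_1$, its wait runs until $e+\twait(r)>t_1+2\Delta$, unless it ends earlier because $\validRound_p=r-1$. Early termination can happen, however, and then $vr=r-1\ge\validRound_q$, which contradicts $vr<\validRound_q$ directly. Otherwise line~\ref{ln:wfv} fires on the certificate of round $s$ whenever $\validRound_p<s$. Since $\validRound_p$ never decreases, at $t_p$ we have $vr=\validRound_p\ge s$, a contradiction.

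A remaining subtlety is the case $q=p$ itself. There $\validRound_p=vr$ at $t_p$, and afterwards $p$ precommits on its own proposal by self-delivery, so this case is trivial. I would also check that a Rule~\ref{ln:obs} update inside round $r$ yields $\validRound_q=r$, which is the allowed alternative in the statement.
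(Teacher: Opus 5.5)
Your proof is correct and follows the paper's argument essentially step for step: catch-up bounds $p$'s entry, for $q\neq p$ the recorded certificate is located before $q$'s entry into round $r$ (hence by $t_1+\Delta$, reaching $p$ by $t_1+2\Delta$), and each of the two ways $\mathit{WaitForValid}$ can end forces $vr\geq\validRound_q$. Two small remarks: the second claim never actually uses the third, so the reordering is unnecessary; and the reduction to $vr<\validRound_q<r$ tacitly uses the standing property $\validRound_q\leq\round_q$.
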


\begin{proof}
By Lemma~\ref{lem:catchup}, $p$ enters round $r$ by $t_1 + \Delta$, and $\mathit{WaitForValid}$ lasts at most $\twait(r)$; this gives the bound on $t_p$.

Let $q$ be a correct process in round $r$ at time $t\geq t_p$ that has not yet precommitted there, and suppose $\validRound_q\neq r$. For $q=p$, its unvoted state is the one used at the broadcast, so $\validRound_q=vr$: later writes in $r$ either set the round to $r$ or accompany its precommit.

Now take $q\neq p$. While $q$ is in $r$, Rule~\ref{ln:obs} sets its recorded round to $r$, Rule~\ref{ln:repro} accompanies its precommit, and it does not run the proposer's wait. Its recorded certificate was therefore received while in a round at most $r-1$, hence by $t_1 + \Delta$ since $q$ entered round $r$ by then, and by the gossip property $p$ holds it by $t_1 + 2\Delta$.

If $\mathit{WaitForValid}$ returned because $\validRound_p = r-1$, then $vr = r-1 \geq \validRound_q$, since $\validRound_q \leq r-1$. Otherwise it returned at $t_p > t_1 + 2\Delta$, when $p$ held every certificate $q$ had recorded, and line~\ref{ln:wfv} had raised $\validRound_p$ to the highest round among the certificates $p$ holds of rounds up to $r-1$; so again $vr \geq \validRound_q$.

Finally $p$ received the certificate for $x$ in round $vr$ by $t_p$, so every correct process has it by $t_p + \Delta$.
\end{proof}

\begin{lemma}[good round]
\label{lem:good}
Let $r>0$ be a round with a correct proposer $p$, let $t_1 > \GST$ be the time at which the first correct process enters round $r$, and let $\twait(r) > 2\Delta$ and $\tprop(r) > \twait(r) + 2\Delta$. Then every correct process decides by $t_1 + 3\Delta + \twait(r)$.
\end{lemma}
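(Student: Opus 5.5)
The plan is to invoke Lemma~\ref{lem:proposer} and then follow the proposal through every correct process until a decision certificate forms. First we dispose of early decisions. If some correct process decides before the good-round argument completes, gossip delivers its decision certificate and value to everyone within $\Delta$; we will check that this stays within the bound. We also have to handle processes that leave round $r$ before the proposal arrives. By Lemma~\ref{lem:sync}, any such departure forces everyone out of $r$ quickly. The aim is therefore to show that no correct process leaves $r$ without first precommitting $\id(x)$, and that once $n-f$ correct processes have precommitted $\id(x)$ a decision follows.

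\textbf{Step 1 (timing).} By Lemma~\ref{lem:proposer}, $t_p \le t_1+\Delta+\twait(r)$. By Lemma~\ref{lem:catchup}, every correct process is in round at least $r$ by $t_1+\Delta$. The proposal, and when $vr\ge0$ the certificate for $x$ in round $vr$, reach every correct process by $t_p+\Delta \le t_1+2\Delta+\twait(r)$. Each nonproposer started $\tprop(r)$ no earlier than $t_1$, so its timeout expires after $t_1+\twait(r)+2\Delta \ge t_p+\Delta$. Hence no correct process precommits $\nil$ on timeout before receiving the proposal.

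\textbf{Step 2 (acceptance).} Let $q$ be a correct process in round $r$ that has not precommitted when the proposal arrives. Lemma~\ref{lem:proposer} gives $\validRound_q \le vr$ or $\validRound_q=r$.
\begin{itemize}
\item If $vr=-1$ and $\validRound_q=-1$, Rule~\ref{ln:fresh} accepts $x$, which a correct proposer obtained from $\mathit{getValue}$ and is therefore valid.
\item If $vr\ge0$ and $\validRound_q\le vr$, Rule~\ref{ln:repro} accepts, since $q$ holds the certificate.
\item If $\validRound_q=r$, then $q$ observed a certificate for some $y$ in round $r$. I expect this case to be the main obstacle. The plan is to show that $y=x$: at least $f+1$ correct processes precommitted $\id(y)$ in round $r$. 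Each of them either accepted the proposal, which gives $y=x$, or cast an exit precommit. An exit precommit needs a prior round-$r$ certificate for $y$, so by induction on the first correct precommit for $y\neq x$ we reach a contradiction. Timeouts are excluded by Step~1. When $\validValueId_q=\id(x)$ the acceptance conditions hold.
\end{itemize}
The same first-precommit induction shows that every exit precommit cast in round $r$ is for $\id(x)$. Any quorum of round $r$ contains at most $f$ Byzantine precommits, plus correct precommits that are all for $\id(x)$ or nil. A nil precommit could arise only from a timeout, and Step~1 excludes that before the proposal is delivered.

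\textbf{Step 3 (decision).} Every correct process therefore precommits $\id(x)$ in round $r$ by $t_p+\Delta$, and these $n-f$ precommits reach everyone by $t_p+2\Delta \le t_1+3\Delta+\twait(r)$. The decision rule then fires at every correct process. For $vr\ge0$ the full value $x$ must also be known. We get it from the precommitters' retained value or from gossip dissemination, as discussed in Appendix~\ref{app:implementation}. Any correct process that left $r$ earlier did so only after precommitting $\id(x)$, so it still contributes to the certificate and receives it by the same bound.
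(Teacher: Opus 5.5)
Your outline follows the paper's route: the dichotomy of Lemma~\ref{lem:proposer}, a first-conflicting-precommit induction, timeouts pushed past $t_p+\Delta$, and exit precommits forced to $\id(x)$. However, it takes for granted the one event the lemma most needs. Lemma~\ref{lem:proposer} is conditional: it assumes that $p$ broadcasts and only then bounds $t_p$. In Walk the broadcast is not automatic. Rule~\ref{ln:exit} applies during $\mathit{WaitForValid}$ and aborts it, and the atomic pre-broadcast check returns without sending once $p$ has left round $r$, precommitted in it, or decided. If a quorum of round $r$ reached $p$ during its wait, $p$ would cast an exit precommit, start round $r+1$ and never propose. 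There would then be no $x$ for your Steps~2 and~3 to follow.

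The missing argument runs as follows. Put $T=t_1+\Delta+\twait(r)$. Assume no correct process decides by $T+\Delta$, since otherwise gossip already gives $T+2\Delta=t_1+3\Delta+\twait(r)$; you announced this check but never fixed its threshold. Then, up to $T$, there is no first correct precommit in round $r$ before $p$'s broadcast:
\begin{itemize}
\item a proposal vote needs the proposal;
\item a timeout fires no earlier than $t_1+\tprop(r)>T+\Delta$;
\item an exit precommit needs a quorum of round $r$, hence at least $n-2f$ earlier correct precommits there.
\end{itemize}
So no quorum of round $r$ exists. Catch-up puts $p$ in round $r$ by $t_1+\Delta$, and $p$ cannot pass $r$ without such a quorum. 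Hence $p$ completes its wait and broadcasts by $T$. This is where $\tprop(r)>\twait(r)+2\Delta$ is used before any proposal exists.

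Two further steps need repair.

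\emph{Exit precommits for $\nil$.} Your claim that a $\nil$ precommit ``could arise only from a timeout'' is false. A correct process also precommits $\nil$ when it rejects the proposal, and as an exit precommit on a quorum in which no value has $2f+1$ precommits. Your Step~2 induction, which is sound, rules out only non-nil values other than $\id(x)$, so it does not exclude an exit $\nil$. Run a single induction on the first correct precommit in round $r$ that is not for $\id(x)$, with $\nil$ included. Before that event, any quorum of round $r$ holds at least $n-2f\geq 2f+1$ correct precommits for $\id(x)$, and no other value can reach $2f+1$ in it, so an exit precommit is for $\id(x)$. Your acceptance cases exclude rejections, and Step~1 excludes timeouts up to $t_p+\Delta$.

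\emph{Availability of $x$ when $vr\geq0$.} Neither reproposal voters nor exit voters need to hold the full value, so ``the precommitters' retained value'' does not supply $x$. Instead, $p$'s certificate contains a correct precommit for $\id(x)$ sent before $t_p$. As in Theorem~\ref{thm:walk-valid}, the earliest correct precommit for $\id(x)$ accepted a fresh proposal carrying $x$, which a correct process therefore received before $t_p$. Gossip delivers it to everyone by $t_p+\Delta$.

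With these additions your argument coincides with the paper's proof.
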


\begin{proof}
Put $T=t_1+\Delta+\twait(r)$, the latest time at which the correct proposer could finish an uninterrupted wait. If some correct process decides by $T+\Delta$, gossip makes every correct process decide by $T+2\Delta$, the claimed bound. Otherwise assume no such decision.

First, $p$ broadcasts by $T$. Catch-up puts it in round $r$ by $t_1+\Delta$ unless it has already left. Before its broadcast and up to $T$, there can be no first correct precommit in round $r$: a proposal vote requires its proposal, a timeout fires no earlier than $t_1+\tprop(r)>T+\Delta$, and an exit precommit requires a quorum of round $r$, which needs at least $n-2f$ correct precommits in round $r$. With only Byzantine precommits, no quorum of round $r$ exists. Thus no correct process leaves $r$ before the broadcast, the proposer's wait is not aborted, and it broadcasts a proposal for $x$ with advertised round $vr$ at some $t_p\leq T$.

Let $H=t_p+\Delta$. Up to $H$, every correct precommit in round $r$ is for $\id(x)$. Consider the first correct precommit in round $r$ that is not. It is not a timeout, which is later than $H$.

It is not an exit precommit: a quorum of round $r$ holds at least $n-2f \geq 2f+1$ correct precommits, all for $\id(x)$ before this event, and no other value has $2f+1$ precommits in it, so the exit precommit is for $\id(x)$.

It is not a rejection of $p$'s proposal: by Lemma~\ref{lem:proposer}, the rejecting process $q$ has $\validRound_q \leq vr$, in which case the proposal rule accepts, with $\valid(x)$ holding for a fresh proposal since $p$ is correct, or $\validRound_q = r$, in which case $\validValueId_q$ was recorded by Rule~\ref{ln:obs} from $2f+1$ precommits in round $r$, of which at least $f+1$ are correct and hence for $\id(x)$, so $\validValueId_q = \id(x)$ and the proposal rule accepts by value equality.

Catch-up puts all correct processes in a round at least $r$ by $t_1+\Delta\leq H$, and the proposal and its supporting certificate reach them by $H$. Each correct process has consequently precommitted $\id(x)$ in $r$ by $H$: on receiving the proposal, or as an exit precommit on a quorum of round $r$, which is for $\id(x)$ as just shown.

Every correct process holds these at least $n-f$ precommits by $H+\Delta$. It also holds $x$: a fresh value travels in $p$'s proposal; otherwise the certificate used by $p$ contains a correct precommit sent before $t_p$, and the earliest correct precommit for that identifier followed a validated fresh proposal, by Theorem~\ref{thm:walk-valid}. Gossip delivers that proposal by $t_p+\Delta$. All processes decide by $t_p+2\Delta\leq T+2\Delta=t_1+3\Delta+\twait(r)$.
\end{proof}

\begin{theorem}
\label{thm:termination}
Let $n \geq 5f+1$ and at most $f$ Byzantine processes. Walk satisfies Termination.
\end{theorem}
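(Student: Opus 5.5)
The plan is to combine Lemma~\ref{lem:ends} with Lemma~\ref{lem:good}, so that the correct processes advance through rounds until one of them reaches a round whose proposer is correct and whose first correct entry comes after $\GST$. Rounds with $r\leq 0$ need no separate treatment, because the good round we pick can always be taken to satisfy $r>0$.

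\textbf{Step 1: undecided processes climb through all rounds.} Suppose some correct process never decides. If any correct process decides, the gossip property delivers its decision certificate and value to every correct process. Each of them then decides by Rule~\ref{ln:decide}, which is a contradiction. So no correct process ever decides. By Lemma~\ref{lem:ends}, no correct process stays in a fixed round forever. By the third structural property, each correct process therefore passes through every round $0,1,2,\dots$ in order, so every round is entered by some correct process at some finite time.

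\textbf{Step 2: choose a good round.} There are finitely many events in any bounded interval, and rounds are entered in increasing order. Hence only finitely many rounds are first entered by a correct process before $\GST$, and some round $r_0$ has its first correct entry after $\GST$. Every round $r\geq r_0$ then has its first correct entry $t_1(r)>\GST$, because entries of higher rounds follow entries of lower ones at the process that makes them.

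The timeout hypotheses hold for all sufficiently large $r$, and the proposers rotate round-robin. So there is a round $r\geq\max(r_0,1)$ with a correct proposer for which $\twait(r)>2\Delta$ and $\tprop(r)>\twait(r)+2\Delta$. Lemma~\ref{lem:good} applied to $r$ gives that every correct process decides by $t_1(r)+3\Delta+\twait(r)$, which contradicts Step~1. Termination follows.

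The main obstacle is the claim in Step~2 that the first correct entry into sufficiently large rounds happens after $\GST$. I would make the finiteness explicit: entering round $r$ requires leaving round $r-1$, which is an event at a correct process, so entering round $r$ takes at least $r$ distinct events. Before $\GST$ only finitely many events occur, so only finitely many rounds can be entered by then. A second point to check is that Lemma~\ref{lem:ends} does not itself assume termination. As stated, it derives progress from the proposal timeouts and the gossip of the precommits, so this use is not circular.
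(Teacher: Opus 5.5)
Your proof is correct and follows the paper's own argument. Gossip handles the case where some correct process decides. Otherwise Lemma~\ref{lem:ends} makes rounds advance forever, and Lemma~\ref{lem:good} applies to a round that has a correct proposer, satisfies the timeout bounds, and whose first correct entry is after $\GST$. Your finiteness argument for that last condition spells out what the paper asserts in a single clause.
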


\begin{proof}
If some correct process decides $v$ at time $t$, every correct process holds the same $n-f$ precommits and the value by $\max\{t,\GST\} + \Delta$ and decides. Otherwise, by Lemma~\ref{lem:ends}, rounds advance forever after $\GST$. Proposers rotate, so infinitely many rounds have correct proposers; the eventual timeout assumptions ensure that from some round on they satisfy the conditions of Lemma~\ref{lem:good}; and from some round on the first correct process enters the round after $\GST$. The first round that meets all three decides at every correct process.
\end{proof}

\section{Correctness of Justify}
\label{app:justify}

In the proofs below, $x$ denotes the underlying proposed value: the proposal carries $x$ when fresh and $\id(x)$ when certified. The precommit is for $\id(x)$ in both cases; possession of the full value is required for deciding, not for a certified reproposal.

\begin{theorem}
\label{thm:justify-safe}
Let $n \geq 5f+1$ and at most $f$ Byzantine processes. Justify satisfies Agreement and Validity.
\end{theorem}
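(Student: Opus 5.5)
The plan follows the proof of Theorem~\ref{thm:walk-safe}, replacing the recorded valid state by the attached justification. Fix an execution and a value $v$ that at some point has $n-f$ precommits from a set $V$ of distinct processes in round $r$. The round-$r$ counting transfers unchanged, since it uses no protocol state. No $x\neq v$ ever has $2f+1$ precommits in round $r$. Every quorum of round $r$ contains a certificate for $v$. By Lemma~\ref{lem:evidence}(iv), no dead certificate for round $r$ ever exists. I would then prove two invariants jointly over a prefix of the execution, ordered by sends:
\begin{enumerate}[nosep,label=(J\arabic*)]
\item no correct process has sent $\langle\PRECOMMIT,s,\id(x)\rangle$ with $s>r$ and $x\neq v$;
\item for every $s\geq r$ and $x\neq v$, fewer than $2f+1$ distinct processes have sent $\langle\PRECOMMIT,s,\id(x)\rangle$.
\end{enumerate}
Justify has no local valid state, so Walk's invariant (I1) is no longer needed.

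\emph{First violation of (J2).} This step is identical to Walk. For $s=r$ it contradicts the counting. For $s>r$, at least $f+1$ of the senders are correct, and one of their sends is either earlier, contradicting (J1), or is this very event, which reduces to the next case.

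\emph{First violation of (J1).} A correct process in round $s>r$ precommits $\id(x)$ with $x\neq v$. The timeout sends $\nil$. An exit precommit for $\id(x)$ needs a certificate for $x$ in round $s\geq r$, contradicting (J2). The remaining case is Rule~\ref{ln:jprop} accepting a proposal with a valid justification $J=(vr,C,D)$ for round $s$.
\begin{itemize}[nosep]
\item If $vr<r$, then $D$ contains a dead certificate for round $r$, since $vr<r<s$. No such certificate exists.
\item If $vr=r$, then $C$ is a certificate for $x_J=\id(x)$ in round $r$, so $x=v$.
\item If $vr>r$, then $C$ is a certificate for $x\neq v$ in round $vr\geq r$, contradicting (J2).
\end{itemize}
The main point to verify is that the $vr<r$ case is sound. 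Dead certificates are sets of signed precommits, possibly including Byzantine signatures or conflicting ones, so the impossibility must come from Lemma~\ref{lem:evidence}(ii) applied to all precommits ever sent, not only to correct ones. That lemma is stated for arbitrary precommit sets and has exactly this form. A second point is that a fresh proposal ($vr=-1$) is covered by the same case: $D$ then spans every round $0\le j<s$, including $r$.

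Agreement then follows as in Walk. A decision for $w$ on $n-f$ precommits of round $s\geq r$ gives $2f+1$ senders for $\id(w)$, and (J2) forces $w=v$. If $s<r$, swap the two decisions.

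Validity follows Theorem~\ref{thm:walk-valid}. Take the earliest correct precommit for $\id(v)$ with $v$ decided. It is not an exit precommit, which would need $f+1$ earlier correct precommits for $\id(v)$. It is not an acceptance with $vr\geq0$, whose certificate contains a correct precommit for $\id(v)$ in an earlier round. It must therefore be an acceptance with $vr=-1$, which checked $\valid(v)$. A decision needs $n-f\geq f+1$ precommits, so at least one is correct. Neither argument uses timing.
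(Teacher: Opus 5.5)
Your proof is correct and follows essentially the same route as the paper: the same case split on $vr$ (a dead certificate for round $r$ cannot exist, $vr=r$ forces $x=v$, $vr>r$ is impossible), the same treatment of exit precommits, and the same Validity argument. The only difference is bookkeeping. The paper proves only your (J1), by taking the earliest conflicting correct precommit. It inlines your (J2) by observing that a certificate for $x$ in a round above $r$ contains $f+1$ earlier correct precommits for $\id(x)$, which contradicts minimality. It then handles decisions in round $r$ by the round-$r$ counting, where you use (J2) uniformly for all $s\geq r$.
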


\begin{proof}
Let $v$ obtain precommits from a set $V$ of $n-f$ distinct processes in round $r$. By Lemma~\ref{lem:evidence}(iv), no value other than $v$ has a certificate in round $r$, and no dead certificate for round $r$ exists.

We claim that no correct process ever sends a precommit for $\id(x)$ with $x \neq v$ in a round above $r$. Suppose otherwise and consider the earliest such precommit, by $p$ in round $s > r$. It is not an exit precommit, since that needs $2f+1$ precommits for $\id(x)$ in a quorum of round $s$, at least $f+1$ of them from correct processes that precommitted $\id(x)$ in round $s$ earlier. Therefore $p$ accepted a proposal for $x$ with $J=(vr,C,D)$ a justification for round $s$.

If $vr < r$, then $D$ contains a dead certificate for round $r$, which does not exist. If $vr = r$, then $C$ is a certificate for $x$ in round $r$, so $x = v$. If $r < vr < s$, then $C$ is a certificate for $x$ in round $vr$, at least $f+1$ of whose senders are correct processes that sent a precommit for $\id(x)$ in round $vr > r$ earlier than $p$'s, contradicting the choice of $p$.

Now let a correct process decide $w$ on $n-f$ precommits of a round $s$. If $s > r$, at least one of these precommits comes from a correct process, so $w = v$ by the claim. If $s = r$, then $w = v$ by the round-$r$ counting. If $s < r$, exchange the roles.

Validity: the earliest correct precommit for a decided value is an acceptance with $vr = -1$, which checks $\valid$; acceptances with $vr \geq 0$ and exit precommits need earlier correct precommits for the same identifier.
\end{proof}

Acceptance depends on the proposal rather than on the recipient's recorded state. A lagging process that holds a certificate for an old round can therefore accept a fresh proposal carrying a dead certificate for that round. This removes the wait at the proposer.

\begin{theorem}
\label{thm:responsive}
Let $n \geq 5f+1$ and at most $f$ Byzantine processes, and let $\tprop(r)$ be finite, positive, nondecreasing and greater than $2\Delta$ for every sufficiently large $r$. Justify satisfies Termination. Moreover, let $r$ be a round with a correct proposer $p$, let $t_1 > \GST$ be the time at which the first correct process enters it, and let $\tprop(r) > 2\Delta$. Then every correct process decides by $t_1 + 3\Delta$; and if $p$ enters round $r$ before deciding, at a time $t_p \leq t_1 + \Delta$, it broadcasts its proposal at $t_p$ and every correct process decides by $t_p + 2\Delta$. If after $\GST$ the dissemination property holds with an actual delay $\delta \leq \Delta$, the same argument gives $t_p + 2\delta$ and $t_1 + 3\delta$.
\end{theorem}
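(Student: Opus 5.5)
I would follow the structure of the Walk termination proof (Lemmas~\ref{lem:catchup}, \ref{lem:sync}, \ref{lem:ends} and Theorem~\ref{thm:termination}), transferring each step to Justify and then proving a Justify counterpart of Lemma~\ref{lem:good}. The transfer begins with catch-up. Justify's exit rule is the same as Walk's, and the proposer has no wait that could be aborted. So a correct process in round $s$ at time $t$ holds exit quorums $E_j$ for every $j<s$. By gossip, every undecided correct process reaches a round at least $s$ by $t+\Delta$. Round-change synchrony follows immediately. Lemma~\ref{lem:ends} carries over verbatim: each process precommits on a proposal or on the finite $\tprop$, and a correct proposer broadcasts at once and self-delivers. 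Global Termination then follows exactly as in Theorem~\ref{thm:termination}, once the good-round bound is established.

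For the good round $r$, I would first show that $p$ enters $r$ by $t_1+\Delta$ unless it has already decided, and that it broadcasts at its entry time $t_p$. The broadcast is immediate because $J$ is built from held certificates and retained exit quorums. $J$ is a valid justification: the largest certified round $vr<r$ supplies $C$, and each exit quorum for $vr<j<r$ contains no certificate by maximality, hence is dead by Lemma~\ref{lem:exit}. Next I would show that no correct process precommits in $r$ before time $t_p$. A timeout fires no earlier than $t_1+\tprop(r)>t_1+2\Delta\geq t_p+\Delta$. An exit precommit needs a quorum of round $r$, which needs at least $n-2f$ correct precommits in $r$, so there is no first one. A proposal vote needs the proposal.

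The central claim is that every correct precommit in round $r$ up to $H=t_p+\Delta$ is for $\id(x)$. This is where Justify is easier than Walk. Acceptance depends only on $J$ and the payload, not on the recipient's state, and $p$'s $J$ is a genuine justification for round $r$ with a valid fresh value or the certified identifier. So every recipient still in step $\propose$ precommits $\id(x)$. Exit precommits are for $\id(x)$ by the same count as in Lemma~\ref{lem:good}: a quorum contains at least $n-2f\geq 2f+1$ correct precommits, all for $\id(x)$, and no other value can reach $2f+1$. The proposal reaches everyone by $H$, and all correct processes are in round $\geq r$ by $t_1+\Delta\leq H$. A process that left $r$ earlier did so by an exit, after precommitting. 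Hence all correct processes have precommitted $\id(x)$ in $r$ by $H$, every correct process holds $n-f$ of these precommits by $t_p+2\Delta$, and it decides, provided it knows $x$.

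The main obstacle is that last point, value availability when $vr\geq0$. The proposal carries only $\id(x)$, so the full value must be shown to exist somewhere retrievable. I would argue as in Validity (Theorem~\ref{thm:justify-safe}). The certificate $C$ contains a correct precommit for $\id(x)$ sent before $t_p$. The earliest correct precommit for $\id(x)$ followed a fresh proposal of $x$ that a correct process received, and gossip delivers it by $t_p+\Delta$. The bounds then follow. When $p$ enters before deciding, every correct process decides by $t_p+2\Delta$, and since $t_p\leq t_1+\Delta$ this gives $t_1+3\Delta$. If some correct process decides early, including $p$ before entering, gossip gives every correct process a decision within $\Delta$, which is still within $t_1+3\Delta$. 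For an actual delay $\delta$, replace $\Delta$ by $\delta$ throughout. The timeout inequality $\tprop(r)>2\Delta\geq2\delta$ still excludes nil timeouts before $t_p+\delta$.
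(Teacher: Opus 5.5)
Your proposal is correct and follows the paper's proof essentially step for step. The paper likewise transfers Lemmas~\ref{lem:catchup}, \ref{lem:sync} and~\ref{lem:ends}, has $p$ broadcast a valid justification at entry, and runs a first-conflict argument up to $t_p+\Delta$ excluding timeouts, exit precommits and rejections. It also recovers a reproposed value from the earliest correct precommit for its identifier and disposes of early decisions by gossip. Your separate step showing that no correct process precommits in round $r$ before $t_p$ is carried over from Lemma~\ref{lem:good}; in Justify there is no proposer wait to protect, and the first-conflict argument already covers that interval, so the step is harmless but unnecessary.
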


\begin{proof}
Lemmas~\ref{lem:catchup}, \ref{lem:sync} and~\ref{lem:ends} hold for Justify, since its exit rule is that of Walk and its proposer broadcasts without waiting.

If $p$ decides before entering round $r$, it does so by $t_1 + \Delta$, since otherwise catch-up would have put it in round $r$ by then; gossip then makes every correct process decide by $t_1 + 2\Delta$.

Otherwise $p$ enters round $r$ at some $t_p \in [t_1, t_1+\Delta]$, holding a justification as argued in Section~\ref{sec:justify}, and broadcasts its proposal for $x$ with justification $J$ at $t_p$.

If some correct process decides by $t_p + \Delta$, gossip makes every correct process decide by $t_p + 2\Delta$. Otherwise every correct process is still active until $t_p + \Delta$, and we show that all of them precommit $\id(x)$ in round $r$ by then.

Every correct $q$ reaches a round at least $r$ by $t_1 + \Delta \leq t_p + \Delta$ and holds the proposal by $t_p + \Delta$, before $\mathit{OnTimeoutPropose}(r)$ fires at $q$, which happens no earlier than $t_1 + \tprop(r) > t_1 + 2\Delta \geq t_p + \Delta$.

Up to $t_p + \Delta$, every correct precommit in round $r$ is for $\id(x)$: consider the first correct precommit in round $r$ that is not; it is not a timeout; it is not an exit precommit, since a quorum of round $r$ holds at least $n-2f \geq 2f+1$ correct precommits, all for $\id(x)$ before this event, and no other value has $2f+1$ in it; and it is not a rejection of $p$'s proposal, because $J$ verifies, $\valid(x)$ holds when $vr = -1$ since $p$ is correct, and $\id(x)=x_J$ when $vr\geq0$.

Hence every correct process precommits $\id(x)$ in round $r$ by $t_p + \Delta$, on the proposal or as an exit precommit on a quorum of round $r$.

These at least $n-f$ precommits reach every correct process by $t_p+2\Delta$. A fresh $x$ travels in the proposal. For a reproposal, the support held by $p$ at $t_p$ contains an earlier correct precommit for $\id(x)$. The earliest correct such precommit followed an application-valid fresh proposal held by a correct process before $t_p$, so gossip supplies the full value by $t_p+\Delta$. Every correct process decides by $t_p + 2\Delta \leq t_1 + 3\Delta$.

Replacing $\Delta$ by $\delta$ in catch-up, in the delivery of the proposal and in the delivery of the precommits gives the bounds with $\delta$; the timeout condition is still $\tprop(r) > 2\Delta \geq 2\delta$. Termination follows as in Theorem~\ref{thm:termination}, with Lemma~\ref{lem:good} replaced by the bound just proved.
\end{proof}

Between two consecutive rounds with correct proposers after $\GST$, Justify waits for messages only. A round whose proposer sends nothing costs $\tprop$, as a Byzantine proposer costs a round timer in any protocol of this kind, and one message delay for the quorum of nil precommits.

The justification carried by the next proposal is one certificate plus one dead certificate per dead round since it; a mixed dead certificate can be larger than a support certificate. With aggregate signatures a certificate or a nil certificate is one signature and a bitmap, and a mixed dead certificate is one aggregate per distinct value it contains.

A correct proposer's proposal carries its exit quorums for $vr<j<r$. A lagging process already in one of those rounds can reach $r$ from that evidence; a process at or below $vr$ still needs the missing round quorums. A general valid justification may contain smaller dead certificates and need not suffice for catch-up.

\section{Advancing on \texorpdfstring{$2f+1$}{2f+1} precommits}
\label{app:fastexit}

Walk and Justify leave a round on a quorum of $n-f$ precommits of that round. This appendix replaces the quorum by a support certificate or a dead certificate, separating round advancement from the decision threshold as in Minimmit~\cite{minimmit}, and shows that the proofs of Appendices~\ref{app:walk} and~\ref{app:justify} carry over. The change applies to Walk and to Justify; timed Jump in Appendix~\ref{sec:jumpcarry} uses this version of Walk and additionally advances on a timed support certificate of a later round. We do not analyze the change for Patch or Carry. The proof below identifies the required changes to the proofs in Appendices~\ref{app:walk} and~\ref{app:justify}.

\paragraph{The rule.} Replace Rule~\ref{ln:exit} of Algorithm~\ref{alg:walk}, and Rule~\ref{ln:jexit} of Algorithm~\ref{alg:justify}, by the two rules below. Rule~\ref{ln:obs} of Walk stays and is applied before them.
\begin{itemize}[nosep]
\item \emph{Certificate exit.} Upon a certificate for some $x$ in round $\round_p$, for the first time in this round: if $\step_p = \propose$ then precommit $\id(x)$ and set $\step_p \gets \pcstep$; if $\decision_p = \nil$ then $\mathit{StartRound}(\round_p+1)$.
\item \emph{Dead exit.} Upon a dead certificate for round $\round_p$: if $\decision_p = \nil$ then start round $\round_p+1$.
\end{itemize}
The certificate exit has priority over the dead exit. In Justify, $D$ then holds, for each round $j$ in $(vr, r)$, the dead certificate on which the proposer left round $j$.

A round whose proposer sent nothing now ends $2f+1$ nil precommits after the proposal timeout rather than $n-f$; a round whose proposal $2f+1$ processes accepted ends on their precommits, and a decision still requires $n-f$ matching precommits through the unchanged decision rule; the exit precommit is never $\nil$, since a dead exit sends nothing. No second message type and no second signature per round is introduced.

The threshold $2f+1$ suffices for matching non-nil precommits or for nil precommits; it does not suffice for an arbitrary mixed set. A mixed dead certificate can be larger.

Minimmit instead uses either $2f+1$ matching block votes, an M-notarization, or $2f+1$ dedicated nullify messages~\cite[Section~3 and Algorithm~1]{minimmit}. A process may nullify after voting only on messages from $2f+1$ distinct senders, each a nullification or a vote opposing its vote; a timeout permits nullification only before voting, and voting after nullification is forbidden. The extension here uses ordinary precommits and does not adopt that additional signed-message mechanism.

\begin{theorem}
\label{thm:fastexit}
Let $n \geq 5f+1$ and at most $f$ Byzantine processes. Walk and Justify with the certificate exit and the dead exit satisfy Agreement, Validity and Termination, under the assumptions of Theorems~\ref{thm:termination} and~\ref{thm:responsive} respectively. Their Agreement holds already for $n \geq 3f+1$.
\end{theorem}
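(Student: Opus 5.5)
The plan is to reuse the proofs of Appendices~\ref{app:walk} and~\ref{app:justify} almost verbatim. Only the places where a quorum of round $s$ was used as exit evidence change, and each is replaced by a case split: the process left on a support certificate (certificate exit) or on a dead certificate (dead exit). The third structural property of Walk becomes: a process reaches a round above $r$ only by leaving round $r$ on a certificate of round $r$ or on a dead certificate for round $r$.

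\emph{Agreement for Walk, at $n\geq3f+1$.} Fix $v$ with $n-f$ precommits from a set $V$ in round $r$. The round-$r$ counting in Theorem~\ref{thm:walk-safe} needs only that the processes outside $V$, and the Byzantine processes inside it, number at most $2f$. This holds for every $n$, so no $x\neq v$ ever has a certificate in round $r$. Lemma~\ref{lem:evidence}(ii) and~(iv) use only intersection arguments, which also hold at $n\geq3f+1$, so no dead certificate for round $r$ exists.

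I then rerun the joint induction on (I1)--(I3).
\begin{itemize}[nosep]
\item For (I1), a process leaving $r$ must do so by the certificate exit, on a certificate for $v$. Rule~\ref{ln:obs} has priority, so it records $(r,\id(v))$ first.
\item For (I2), the only new precommit is the certificate-exit precommit for $\id(x)$, which requires a certificate for $x$ in its round $s>r$. This contradicts (I3). A dead exit sends nothing.
\item (I3) is unchanged.
\end{itemize}
Lemma~\ref{lem:exit}, the only place where $5f+1$ entered through the exit rule, is not used, which is why Agreement survives at $3f+1$.

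\emph{Agreement for Justify.} The proof of Theorem~\ref{thm:justify-safe} carries over directly: the exit-precommit case again needs $f+1$ earlier correct precommits for $\id(x)$ in round $s$. I must also check that a correct proposer can still build $D$. For $vr<j<r$ it left $j$ either on a certificate, which it would hold and which would contradict the maximality of $vr$, or on a dead certificate, which it retains. Validity is unchanged in both protocols, since every new precommit cites $2f+1$ earlier matching precommits.

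\emph{Termination.} Catch-up (Lemma~\ref{lem:catchup}) uses the exit evidence $E_j$ of each round, which is now a certificate or a dead certificate of round $j$. Delivering it to a lagging process in round $j$ enables one of the two exits there. Lemma~\ref{lem:ends} ends with all correct processes precommitting in $s$; by Lemma~\ref{lem:exit} (here $n\geq5f+1$ is needed), these $n-f$ precommits contain a certificate or form a dead certificate, so an exit fires.

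The main obstacle is the good-round argument (Lemma~\ref{lem:good} and Theorem~\ref{thm:responsive}), because a dead exit leaves without precommitting and the certificate exit may fire on fewer than $n-f$ precommits. I would strengthen the invariant ``up to $H$ every correct precommit in round $r$ is for $\id(x)$'' with two consequences.
\begin{itemize}[nosep]
\item Any certificate of round $r$ before $H$ contains $f+1$ correct precommits, hence is for $x$, so certificate-exit precommits are for $\id(x)$.
\item A dead certificate for round $r$ would need $2f+1$ senders not precommitting $\id(x)$, of which $f+1$ are correct. Those are either correct processes that precommitted something else, which the invariant excludes, or correct processes that have not precommitted at all, whose precommits will be for $\id(x)$. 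Hence no correct process dead-exits round $r$ before precommitting.
\end{itemize}
I expect the delicate point to be this second claim: a dead certificate is a set of sent precommits, so the correct members must have actually sent non-$\id(x)$ precommits in round $r$, which the invariant forbids. With both claims, every correct process precommits $\id(x)$ by $H$, and the decision bounds follow exactly as before.
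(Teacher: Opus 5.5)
Your proposal is correct and follows the paper's proof essentially step for step: the same case split of exit evidence into certificate and dead exits, Lemma~\ref{lem:evidence}(iv) to rule out a dead certificate or a conflicting certificate in round $r$, the unchanged joint induction on (I1)--(I3), and Lemma~\ref{lem:exit} for Lemma~\ref{lem:ends}. In the good round you use the same observation as the paper: before $H$ only Byzantine senders can contribute non-$\id(x)$ precommits, so no conflicting certificate and no dead certificate of round $r$ exists. One small correction to your commentary: the original Walk Agreement proof needed $n\geq5f+1$ for the counting that every quorum of round $r$ contains $n-3f\geq2f+1$ precommits for $v$, used in the (I1) case, and not for Lemma~\ref{lem:exit}, which the Agreement proofs never invoked; your (I1) argument correctly avoids that counting, so your conclusion stands.
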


\begin{proof}
We list the places where the proofs of Appendices~\ref{app:walk} and~\ref{app:justify} use the quorum, and what replaces it. Fix $v$ with $n-f$ precommits in round $r$. By Lemma~\ref{lem:evidence}(iv), the only certificate of round $r$ is for $v$ and no dead certificate for round $r$ exists; this uses only $n-f \geq 2f+1$, which is where the bound $n \geq 3f+1$ for Agreement comes from.

\emph{Agreement of Walk.} In the proof of Theorem~\ref{thm:walk-safe}, the third property of Algorithm~\ref{alg:walk} becomes: a process reaches a round above $r$ only by leaving round $r$ through the certificate exit or the dead exit. In the (I1) case, the dead exit is impossible in round $r$, and the certificate exit needs a certificate for some $x$ in round $r$, so $x = v$ and Rule~\ref{ln:obs} has recorded $(r, \id(v))$ as before. In the (I2) case, an exit precommit for $\id(x)$ in round $s \geq r$ needs a certificate for $x$ in round $s$, contradicting (I3). Nothing else changes, and the counting that every quorum of round $r$ contains a certificate for $v$ is no longer needed.

\emph{Validity.} An exit precommit still needs $2f+1$ precommits for the same identifier in the same round, hence $f+1$ earlier correct ones.

\emph{Agreement of Justify.} The proof of Theorem~\ref{thm:justify-safe} uses the quorum only in the exit precommit case, which is replaced as above.

\emph{Termination.} Lemma~\ref{lem:catchup} holds with $E_j$ the certificate or dead certificate on which $p$ left round $j$. Lemma~\ref{lem:ends}: once every correct process has precommitted in round $s$, their at least $n-f$ precommits contain a certificate or form a dead certificate by Lemma~\ref{lem:exit}, which reaches every correct process; and a process that leaves $s$ on a dead certificate before precommitting holds evidence that reaches every correct process as well.

Lemma~\ref{lem:proposer} retains its two alternatives. An intermediate state between observation and the certificate exit can still have $\validRound_q=r$. In the good-round first-conflict argument, that observation uses a certificate containing an earlier correct precommit for the proposed value, so acceptance follows by value equality as in the main proof.

In Lemma~\ref{lem:good}, before the broadcast only Byzantine precommits exist in round $r$, so neither a certificate nor a dead certificate for round $r$ exists and the proposer's wait is not aborted; up to $H$, a conflicting exit precommit needs a certificate for its value in round $r$ and hence $f+1$ earlier correct precommits for it, a dead certificate for round $r$ needs $2f+1$ senders of something other than $\id(x)$ and only the Byzantine processes qualify, and an exit on a certificate is an exit on the certificate for $x$ with an exit precommit for $\id(x)$; the rest is unchanged. Theorem~\ref{thm:responsive} changes in the same two places.
\end{proof}

\section{Alternative round-change designs}
\label{app:alternatives}
\label{sec:five}

Walk and Justify are the two constructions developed and proved in this paper. This appendix retains three other designs only as sketches, to explain the tradeoffs around round skipping. We make no complete Agreement, Validity or Termination claim for these sketches. The optional all-nil shortcut of Appendix~\ref{sec:shortcut} remains disabled.

\begin{table}[ht]
\centering
\small
\begin{tabular}{@{}p{0.13\linewidth}p{0.38\linewidth}p{0.42\linewidth}@{}}
\toprule
Design & Round-change mechanism & Implication \\
\midrule
Patch & Observe support in current and future rounds; retain quorum jumps. & Some processes still skip without the protected value; bound how many can do so. \\
Timed Jump & Import a future support certificate, then jump after a timer. & Needs current-round fast exits, a broadcast guard and a participation argument. \\
Carry & Attach the recorded certificate to precommits; import the highest certificate before a quorum jump. & Larger votes and verification of carried evidence. \\
\bottomrule
\end{tabular}
\caption{Alternative design sketches. All retain proposal timeouts and a timer for round advancement. Patch and Carry retain the published $\tprec$ wait and use it for $\mathit{WaitForValid}$; timed Jump uses $\twait$ at the proposer and $\tprec$ for future-certificate jumps. None is needed by Walk or Justify.}
\label{tab:cmp}
\end{table}

\subsection{Patch}

\label{sec:patch}

Replace the observation rule by: on $2f+1$ precommits for $\id(x)$ in a round $r \geq \round_p$ with $r > \validRound_p$, set $\validPair{p} \gets (r,\id(x))$. Nothing else changes. The rule must not count rounds below $\round_p$: a late-completed certificate for round $r-1$, delivered after the proposer of round $r$ has left $\mathit{WaitForValid}$, would be adopted by every other correct process, which would then reject the correct proposer's value, and Lemma~6 of~\cite{ft} would fail.

The safety intuition starts with the same count as Walk: every quorum of a decided round contains a certificate for its decided value. The patched observation rule records that certificate even at a process still in an earlier round. Unlike Walk, some correct processes may still skip the decided round without recording it. The first quorum formed above that round must be in its immediate successor; its correct signers already recorded the decided value, leaving at most $f$ correct processes outside it. Together with the Byzantine processes, those outsiders cannot supply $2f+1$ conflicting precommits. Turning this intuition into a proof requires simultaneous invariants for recorded state and later certificates.

\subsection{Timed Jump}

\label{sec:jumpcarry}

Start from Walk with the certificate and dead exits of Appendix~\ref{app:fastexit}, rather than the quorum exit of Algorithm~\ref{alg:walk}. Add the rule: on a certificate for $x$ in a round $s > \round_p$, for the first time for round $s$, set $\validPair{p} \gets (s,\id(x))$ if $s > \validRound_p$ and schedule $\mathit{OnTimeoutJump}(s)$ after $\tprec(s)$. The handler calls $\mathit{StartRound}(s+1)$ only if $s \geq \round_p$ and $\decision_p = \nil$. If $p$ is the current proposer and has not broadcast, receipt of this future certificate also aborts its proposer continuation. The current-round certificate and dead exits stay, including the exit precommit on a certificate and the priority of that exit. They also abort a pending proposer continuation.

Immediately before a proposal broadcast from $\mathit{StartRound}(r)$, check that $\round_p=r$, $\step_p=\propose$, $\decision_p=\nil$, and $\validRound_p<r$. If any check fails, return without broadcasting. This check, selection of the proposal and broadcast are one atomic action. The check is unconditional: a process can already hold a future certificate when it enters a round in which it is the proposer. Such a process waits for a current-round exit or its pending jump timer; it must not broadcast a proposal with $vr\geq r$. A round change also cancels the old proposer continuation.

A process still precommits only in its current round. During a pending jump $\validRound_p$ may exceed $\round_p$, so Walk's invariant $\validRound_p\leq\round_p$ cannot be reused. The safety intuition is that the imported certificate preserves the protection of any earlier decision. The timer addresses a separate issue: a process must have time to participate in a good round before skipping it. A termination argument needs the Walk timeout bounds and, for every sufficiently large $s$, finite nondecreasing $\tprec(s)>2\Delta$. Since no future certificate exists before the good proposer broadcasts, that wait prevents a jump before the proposal arrives. These are proof obligations for this sketch, not a termination theorem here. A lagging process can cross a certified stretch using one certificate and one wait.

The fast exits of Appendix~\ref{app:fastexit} are essential to this design: importing a future certificate can move a process ahead while other processes lack the $n-f$ precommits needed for a quorum exit. A support certificate must also suffice to release those processes from its own round. Inheriting the quorum-only exits of Algorithm~\ref{alg:walk} is therefore not an interchangeable choice.

\subsection{Carry}

Every precommit $\langle\PRECOMMIT, s, x, C\rangle$ carries $C$, the certificate behind the sender's valid state, or $\bot$; receivers verify $C$ and discard precommits with invalid certificates. Verified enclosed precommits and certificates count as held evidence for $\mathit{WaitForValid}$, the current-round observation rule and the round-change import below. This makes attachments available to the existing rules; it does not add an unconditional state update on receipt. Every count is by distinct sender in the same height and round.

Keep the published quorum rule and the timer, and when $\mathit{OnTimeoutPrecommit}(s)$ fires on a quorum $Q$ of round $s \geq \round_p$, let $C^\ast$ be the highest-round certificate among the certificate contained in $Q$ if some value has $2f+1$ precommits in $Q$, and the certificates carried by the precommits of $Q$; if $C^\ast$ is for a round above $\validRound_p$, record its round and value identifier in $\validPair{p}$ before starting round $s+1$.

Vote size grows by a certificate, or by an aggregate signature; this is the highQC of HotStuff~\cite{hotstuff} placed in the vote.

The safety intuition is inductive. A quorum of the decided round contains support for the decided value directly. In a later-round quorum, a correct signer that retained the decision's protection carries a certificate with that protection, allowing the receiver to import it before skipping. A complete proof must establish that property for the signers before using it for the receiver. The published post-quorum timer remains important for participation; deleting it is not justified by the import rule alone.

For Patch and Carry, a termination argument must retain the published post-quorum timer and the proposer's evidence wait, with finite, positive, nondecreasing timeouts eventually satisfying $\tprec(r)>2\Delta$ and $\tprop(r)>\tprec(r)+2\Delta$. Verified attachments must count as held evidence. These conditions identify the intended setting for the sketches; the formal termination results in this paper are those for Walk, Justify and their optional fast exits.

\section{Deleting the precommit wait}
\label{sec:timer}

In Tendermint and in~\cite{ft}, a process that holds a quorum of its round waits $\tprec$ before moving on. The wait gives a delayed proposal time to arrive and receive a precommit. Walk has no such post-quorum wait; it still uses $\tprop$ and $\twait$. Justify still uses $\tprop$. This section shows that the wait cannot simply be deleted from the published protocol, from Patch, from Jump or from Carry, and explains how Walk avoids these executions; Walk's termination is proved in Appendix~\ref{app:walk}.

\begin{theorem}
\label{thm:timer}
Let $n \geq 5f+1$ and $f \geq 1$. Remove the precommit timer from the published protocol, from Patch and from Carry, so that a process starts round $s+1$ as soon as it holds a quorum of a round $s \geq \round_p$; and from Jump, so that a process starts round $s+1$ as soon as it holds a certificate of a round $s > \round_p$, either without precommitting or precommitting for the certified value in round $s$ as it does so. Each of these variants has an execution after $\GST$ in which no correct process ever decides.
\end{theorem}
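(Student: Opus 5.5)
The approach is to build, after $\GST$, a single infinite execution that works against all four variants simultaneously. The Byzantine processes and the scheduling within the allowed $\Delta$ are used to end every round before a correct proposer's proposal can collect $n-f$ matching precommits.

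The key observation is that, once the timer is gone, a process leaves round $s$ \emph{the instant} it holds a quorum (or, for Jump, a certificate) of round $s$. A process that has left ignores the late proposal of $s$, since it is no longer in step $\propose$ of round $s$.

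\textbf{Step 1: the repeated round pattern.} I would use a cyclic pattern that repeats forever. Take a round $s$ whose proposer is correct. Schedule the entries into $s$ so that the proposer enters last, up to $\Delta$ after the others, or so that its proposal takes the full $\Delta$ to arrive. Meanwhile the other correct processes hit $\tprop(s)$ only if that is short; otherwise the Byzantine processes help.

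The more robust version does not rely on timeouts at all. Before the proposal arrives, a set of $f$ Byzantine processes together with enough lagging correct processes' precommits of a \emph{later} round $s+1$ suffices for the skip rules of the published protocol, Patch and Carry. These rules accept a quorum of any round $\geq\round_p$, so correct processes jump past $s$ before precommitting there.

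Concretely, I would split the correct processes into two groups $A$ and $B'$ that alternate. Group $A$ sits in round $s$ waiting for the proposal. Group $B'$, already one round ahead, together with the Byzantine processes, supplies $n-f$ precommits of round $s+1$, and these are delivered to $A$ before $p$'s proposal. Group $A$ then skips round $s$ and round $s+1$. The counts need $|B'|+f\ge n-f$, i.e.\ $|B'|\ge n-2f$. Since the Byzantine processes can precommit in any round, the $\nil$ precommits issued on proposal timeouts by the ahead group in the empty rounds provide this.

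\textbf{Step 2: checking each variant.} For the published protocol and Patch, the execution is the same. Patch's broader observation rule only changes recorded state, not progress, so it does not matter. For Carry, the attached certificates are $\bot$ or irrelevant because no value is ever certified. For Jump, I would instead feed a certificate of round $s$ supplied by the Byzantine processes plus $f+1$ correct exit precommits. The simpler route is to let the Byzantine proposer of an odd round create a $2f+1$ certificate in round $s$, which makes lagging processes jump before the next correct proposal arrives.

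\textbf{Step 3: verification.} I must check that message delays never exceed $\Delta$ after $\GST$, that no value ever reaches $n-f$ matching precommits in any round, and that the pattern recurs in every round with a correct proposer, which follows from round-robin.

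I expect the main obstacle to be making the starvation pattern consistent with the gossip property. The quorum of round $s+1$ that makes $A$ skip must itself exist within $\Delta$ without those correct processes having seen $p$'s proposal. I would resolve this by staggering entries by almost $\Delta$ and choosing $\tprop$ within its allowed range, and by exploiting that the ahead group's round-$(s+1)$ proposer is Byzantine or silent in alternate rounds.
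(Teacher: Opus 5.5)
Your construction has a genuine gap: it depends on proposal timeouts firing, and the adversary does not control them. The timeouts are parameters of each variant, and the statement must hold for all of them, in particular for $\tprop(r)>\tprec(r)+2\Delta$, under which the timed protocol terminates. With short proposal timeouts the result would say nothing about the precommit timer. So ``choosing $\tprop$ within its allowed range'' is not available, and staggering entries by up to $\Delta$ cannot make anyone time out.

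Concretely, $B'$ can be in round $s+1$ only after receiving a quorum of round $s$. That quorum needs at least $n-2f$ correct round-$s$ precommits. Since $A$ never precommits in $s$, these come from $B'$. After $\GST$ they are votes on $p$'s proposal, because under these timeouts a correct proposer's proposal reaches every correct process before its proposal timeout. Gossip then delivers that proposal to $A$ within $\Delta$ of $B'$'s earliest entry into $s+1$. A round-$(s+1)$ quorum of $\nil$ precommits issued on proposal timeouts exists only $\tprop(s+1)$ after that entry, which is too late whenever $\tprop(s+1)>\Delta$.

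The fallbacks fail as well. Round-robin with $f<n/5$ gives no Byzantine proposer in alternate rounds, and a silent round also produces its $\nil$ quorum only after $\tprop$. Because $B'$ votes on proposals, values do get certified, so your Carry remark that no value is ever certified does not apply. Groups of sizes at most $f$ and at least $n-2f$ cannot swap roles, and you never say how a new laggard arises once $A$ rejoins in round $s+2$. Finally, the form of Jump that precommits in the certificate's round is not treated.

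The missing idea is that each round needs only one correct absentee, while everyone else votes immediately. Every round can then finish on a time scale $\varepsilon$ with $8\varepsilon<\min\{\Delta,\tau_0\}$, where $\tau_0$ bounds the proposal timeouts from below, and no timeout ever fires. Let the Byzantine processes precommit $\nil$ in every round, and let every correct process except the absentee precommit the same $v$ at once. Each proposer then holds the previous round's certificate, so $\mathit{WaitForValid}$ returns immediately and it reproposes $v$. Every round has exactly $n-f-1$ precommits for $\id(v)$: a quorum containing a certificate, but never a decision.

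The absentee $a_j$ has already precommitted in round $j-1$ and is held there by withholding the others' round-$(j-1)$ precommits. The round-$j$ quorum reaches it first (for Jump, the future certificate that quorum contains) and moves it straight to round $j+1$. Choosing $a_j\notin\{\proposer(j),\proposer(j+1),a_{j-1}\}$ regenerates the lag in every round.

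For the precommit-on-jump variant, the absentee $q_j$ instead stays in round $j$, with the round-$j$ proposal withheld. It jumps on a certificate of round $j+1$ and precommits only there, and it is chosen outside $\{\proposer(j),\proposer(j+1),\proposer(j+2),q_{j-1}\}$. For Carry, the support certificates attached to later precommits are let through, while the old Byzantine $\nil$ precommits that would complete the earlier quorum are withheld.
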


\begin{proof}
Let $\GST=0$, let exactly $f$ processes be Byzantine, and let $\tau_0>0$ be a lower bound on the proposal timeouts. Choose $\varepsilon>0$ with $8\varepsilon<\min\{\Delta,\tau_0\}$. The schedules below have positive microsteps; their timing is specified at the end of the proof.

In the published protocol, Patch and both forms of Jump, every copy of withheld evidence obeys the stated delay, including copies in forwarded messages and proposal attachments.

Carry needs one adjustment: a later precommit itself carries an older support certificate. Permit these attached certificates to arrive, choosing their signers recursively from the correct, non-nil voters, while withholding the old Byzantine nil precommits needed to complete the earlier $n-f$ quorum. The attached support neither makes Carry vote in the old round nor causes it to leave that round.

Let $\proposer(0)$ be correct. The $f$ Byzantine processes precommit $\nil$ in every round at once and, whenever one of them is the proposer, repropose the value certified in the previous round at once, with that certificate.

A decision needs $n-f$ precommits for one value in one round, hence a precommit from every correct process in that round. We give schedules in which every round lacks the precommit of one correct process, the \emph{absentee} of the round. Each intended voter receives its proposal before its proposal timeout; each absentee skips its missing round before a timeout. No correct process precommits $\nil$.

\emph{Schedule A: exits that send no precommit} (the published protocol, Patch, Carry, and Jump without a precommit on jumping). Choose a correct $a_0 \notin \{\proposer(0), \proposer(1), \proposer(2)\}$, let $a_1 := a_0$, and for every $j \geq 2$ choose a correct $a_j \notin \{\proposer(j), \proposer(j+1), a_{j-1}\}$; at least $4f+1 \geq 5$ correct processes exist, so this is possible.

Deliver every message at once, with these exceptions: to $a_0$, the proposal of round $0$, the precommits of round $0$ and the proposal of round $1$ arrive only after a quorum of round $1$ has reached $a_0$; to $a_j$ with $j \geq 2$, the precommits of round $j-1$ from other processes and the proposal of round $j$ arrive only after a quorum of round $j$ has reached $a_j$, and the precommits of round $j+1$ arrive only after $a_j$ has precommitted in round $j+1$.

The proposals are delayed because a reproposal may carry the certificate of the previous round, which is exactly the evidence being withheld. For Carry, allow the attached support as specified above and withhold the old Byzantine nil votes instead. On its initial return, $a_0$ receives the round-2 proposal and precommits before receiving round-2 exit evidence. All these delays stay below $\Delta$.

Round~0: $\proposer(0)$ proposes a value $v$. Every correct process other than $a_0$ precommits $\id(v)$ at once; with the Byzantine $\nil$ precommits this is a quorum of round $0$, which contains a certificate for $v$. It reaches every process but $a_0$ at once, and they enter round $1$ holding $\validPair{} = (0,\id(v))$, by the observation rule in the published protocol, Patch and Carry and by Rule~\ref{ln:obs} in Jump. $\proposer(1)$ is among them and reproposes $v$ at once, since $\validRound = 0$ makes $\mathit{WaitForValid}$ return immediately; every correct process other than $a_0$ precommits $\id(v)$ in round $1$ at once.

The resulting quorum of round $1$ reaches $a_0$, still in round $0$ and in step $\propose$, before anything else, and $a_0$ moves to round $2$ without having precommitted in rounds $0$ and $1$. Rounds $0$ and $1$ end with $n-f-1$ precommits for $v$ each.

Round~$j \geq 2$: we maintain that every correct process other than $a_j$ is in round $j$, that $a_{j-1}$ is among them, and that $a_j$ is in round $j-1$ in step $\pcstep$, having precommitted there and still waiting for the round-$(j-1)$ precommits of the others. This holds for $j = 2$ after round $1$, with $a_0$ in the role of $a_{j-1}$.

$\proposer(j)$ is not $a_j$ and holds $\validPair{} = (j-1,\id(v))$, having left round $j-1$ on a quorum that contained the certificate for $v$; it reproposes $v$ at once. Every correct process other than $a_j$ precommits $\id(v)$ at once; $a_{j-1}$ does so because it is already in round $j$ when the proposal arrives, having moved there when the quorum of round $j-1$ formed. The quorum of round $j$, which contains the certificate for $v$, reaches $a_j$ at once, before the round-$(j-1)$ precommits, and $a_j$ moves to round $j+1$ without having precommitted in round $j$; every correct process other than $a_{j+1}$ enters round $j+1$ as well, while $a_{j+1}$ remains in round $j$.

$\proposer(j+1)$ is not $a_j$ and reproposes $v$ at once, every correct process other than $a_{j+1}$ precommits $\id(v)$ in round $j+1$, $a_j$ among them, and $a_{j+1}$, which is not $a_j$, has precommitted in round $j$ and now waits for the round-$j$ precommits of the others. Thus the invariant holds for $j+1$. Round $j$ ends with $n-f-1$ precommits for $v$.

\emph{Schedule B: Jump with a precommit in the certificate's round.} Here a process that jumps on a certificate of round $s$ precommits in round $s$ but not in the rounds it passes below $s$, so the absentee of round $j$ must be a process that is in round $j$ and jumps on a certificate of round $j+1$. Choose correct $q_j\notin\{\proposer(j),\proposer(j+1),\proposer(j+2),q_{j-1}\}$ for every $j\geq0$, omitting $q_{-1}$ when $j=0$. There are at most four exclusions and at least five correct processes.

Deliver every message at once, except that the proposal of round $j$, the precommits of round $j$ from other processes and the proposal of round $j+1$, which may carry the certificate of round $j$, reach $q_j$ only after a certificate of round $j+1$ has reached it.

In round $j$, the correct processes other than $q_j$ and $q_{j-1}$, at least $n-f-2 \geq 2f+1$ of them, precommit $\id(v)$ at once and form a certificate, on which they leave round $j$; $q_{j-1}$ is in round $j-1$ and, on that certificate, jumps to round $j+1$ precommitting $\id(v)$ in round $j$.

$\proposer(j+1)$, which holds $\validPair{} = (j,\id(v))$, reproposes $v$ at once; the correct processes other than $q_{j+1}$ and $q_j$ precommit it at once and form a certificate of round $j+1$, which reaches $q_j$ before any message of round $j$, so $q_j$ jumps to round $j+2$, precommitting in round $j+1$ and never in round $j$. Round $j$ therefore holds precommits for $v$ from every correct process except $q_j$: $n-f-1$. For $j = 0$ the proposer proposes fresh and $q_{-1}$ does not exist; the counts only improve.

The schedules can be realized with the following timing. Let the proposer broadcast in round $j$ at $t_j=4j\varepsilon$, deliver its proposal to the intended voters at $t_j+\varepsilon$, and deliver their exit evidence to the permitted recipients at $t_j+2\varepsilon$. Delay that evidence to $\proposer(j+1)$ until $t_j+4\varepsilon$, when it enters and proposes in the next round.

The absentee choices ensure that this proposer is not a process whose jump must happen earlier; in Schedule B this is why $q_j$ also excludes $\proposer(j+2)$. Release the withheld older evidence to an absentee one microstep after its prescribed skip. Initially $a_0$ receives the round-1 quorum at $6\varepsilon$ and the withheld round-0 evidence and round-1 proposal at $7\varepsilon$.

Each withheld message, including each forwarded copy, is delivered within $8\varepsilon$ of its first correct receipt. All proposal timeouts are longer than this delay, and time advances by $4\varepsilon$ per round.

Every round has exactly $n-f-1$ non-nil precommits, all for $v$, and Byzantine processes send only $\nil$. Hence no decision certificate exists, although every proposer sends a valid, non-equivocating proposal and all message delays satisfy the gossip bound. This proves that deleting the specified timers permits nontermination. This negative result concerns only the specified timer-deleted rules; it does not establish correctness of the timed sketches in Appendix~\ref{app:alternatives}.
\end{proof}

Neither schedule prevents Walk from deciding. A process leaves a round only on a quorum of that round, and in the executions above every quorum of round $j$ contains the certificate for $v$ and no $2f+1$ precommits for anything else, so the exit precommit is for $v$ and every correct process precommits $\id(v)$ in every round. We do not claim that every rule that skips rounds breaks termination: a rule that skips only on a quorum of a later round, which is also a decision, is harmless, since the skipping process decides.

Consider three ways to ensure that a process precommits in every round it passes:
\begin{enumerate}[nosep]
\item Send a late precommit for a skipped round once its certificate arrives. This requires signers to accept a round below the last one signed.
\item When jumping, precommit the certified value in every skipped round. This costs one signature per skipped round. The intended safety argument is that the certified value equals the decided value in every skipped round above the decision round; we do not develop that argument here.
\item Advance without skipping, and cast the exit precommit on the round's own evidence.
\end{enumerate}
The schedules of Theorem~\ref{thm:timer} fail against all three, since the absentee then precommits in the round it was meant to miss. We prove termination only for the third, which Walk uses.

\section{The optional all-nil shortcut}
\label{sec:shortcut}

After a round whose proposer sent nothing, every correct process precommits $\nil$ on its proposal timeout or before leaving on a quorum; the quorum can include Byzantine precommits for values. The next proposer still waits in $\mathit{WaitForValid}$, up to $\twait$, for a certificate of that round. Lemma~\ref{lem:evidence}(iii) licenses a shortcut: $\mathit{WaitForValid}(r)$ may return as soon as $p$ holds a quorum of round $r-1$ consisting of $\nil$ precommits only, since then no process can hold a certificate of round $r-1$. 

The shortcut preserves Agreement, since it only shortens a wait. It is not part of Algorithm~\ref{alg:walk} or any termination theorem in this paper. 

The shortcut can delay a decision by several rounds. A process that is still catching up may record a certificate of an earlier round less than $\Delta$ before the shortcut fires and reject the fresh proposal; and when correct processes hold certificates of different old rounds, several correct proposers in a row can take the shortcut and repropose their own, each rejected by the holders of a higher one, as long as the rounds end on all-nil quorums faster than the certificates propagate. Termination with the shortcut would need its own argument, which we do not give; leave it disabled in the implementation specified here.

An all-nil quorum is not the same as all processes voting nil: with $n=6$ and $f=1$, one correct process may vote for a value while the other four correct processes and the Byzantine process supply five nil precommits. A quorum of round $r-1$ that contains precommits for a value does not license the shortcut, however few: a faulty proposer that sends its value to two correct processes and precommits it as well can complete a certificate for it even though $p$'s quorum shows two precommits for it, and $p$ must wait.

The shortcut can avoid the full wait when the proposer holds an all-nil quorum; otherwise the usual return conditions remain, including immediate return on a certificate for round $r-1$. Section~\ref{sec:justify} removes the wait altogether.

\section{Extension to chains}
\label{sec:chains}

This appendix proves prefix Agreement for the chained variant of Walk and argues Validity; it sketches the same construction for Justify. It does not claim continuing progress, whose requirements are listed at the end.

In the per-height protocol, a proposer that selects a certified value reproposes it rather than extending it. Under Walk, every correct process that passes round $r$ holds a support certificate for any value decided in it, so the proposer may extend the certified value instead of repeating it.

Let values be blocks in a tree with a genesis block, write $x \succeq y$ if $y$ is $x$ or an ancestor of $x$, and let deciding $x$ also commit every ancestor of $x$. Rounds are no longer grouped into heights, and $\validRound_p$ and $\validValueId_p$ are never reset. When $\validValueId_p\neq\nil$, write $b_p$ for the block it identifies; this notation does not add protocol state. A block authenticates its parent identifier and creation round. Correct proposers create a strict new child of the selected parent, and parent links and any ancestry used in an extension check are verified.

Change Algorithm~\ref{alg:walk} as follows. Replace $\decision_p$ by $\committedTip_p$, the committed tip, initially genesis; remove the per-height decision guards and stopping condition, and do not cancel a proposer continuation on commitment; Rule~\ref{ln:exit} starts the next round unconditionally, since rounds continue past decisions.

In $\mathit{StartRound}$, the proposer proposes a new block $x$ whose parent is $b_p$ if $\validRound_p\geq0$ and genesis otherwise, and broadcasts $\langle\PROPOSAL, r, x, \validRound_p\rangle$; a proposal with $vr \geq 0$ comes with the certificate for the parent $y$ in round $vr$, and $x \succeq y$.

Rule~\ref{ln:fresh} accepts $x$ if $\valid(x)$ and either $\validRound_p = -1$ or $x\succeq b_p$. Rule~\ref{ln:repro} accepts $x$ with a certificate for $y$ in round $vr$ if $x\succeq y$, $\valid$ holds for the part of $x$ beyond $y$, and either $\validRound_p \leq vr$ or $x\succeq b_p$; it sets $\validPair{p} \gets (vr, \id(y))$ when $\validRound_p \leq vr$.

Rule~\ref{ln:decide} becomes: on $n-f$ precommits for $\id(x)$ in some round, with $x$ and its uncommitted ancestry available and their identifiers and parent links verified, if $x \succeq \committedTip_p$ then set $\committedTip_p \gets x$ and commit the ancestors of $x$ not yet committed.

Support certificates from rounds below $\validRound_p$ cannot change the recorded valid state: Rule~\ref{ln:obs} sees the current round, line~\ref{ln:wfv} sees rounds above $\validRound_p$, and Rule~\ref{ln:repro} accepts a parent certified below $\validRound_p$ only when the new block extends $b_p$. An older decision certificate can still advance $\committedTip_p$ if its block extends the current tip.

\begin{theorem}
\label{thm:chain}
Let $n \geq 5f+1$ and at most $f$ Byzantine processes. In chained Walk, if $v$ obtains $n-f$ precommits from distinct processes in round $r$, then every block with a certificate in a round $s \geq r$ extends $v$. Consequently any two decided blocks are comparable under $\succeq$: decided blocks form one chain.
\end{theorem}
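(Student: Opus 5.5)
I would follow the proof of Theorem~\ref{thm:walk-safe} closely, with ``equals $v$'' replaced by ``extends $v$''. Fix $v$ with precommits from a set $V$ of $n-f$ distinct processes in round $r$. The round-$r$ counting carries over unchanged. Any value with $2f+1$ precommits in round $r$ shares a correct signer with $V$, so it is $v$ itself. Every quorum of round $r$ therefore contains a certificate for $v$ and no certificate for another block.

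Next I would carry the three invariants forward with ancestry in place of identity.
\begin{enumerate}[nosep,label=(I\arabic*$'$)]
\item Every correct $p$ with $\round_p>r$ has $\validRound_p\geq r$ and $b_p\succeq v$.
\item No correct process has sent a precommit for a block not extending $v$ in a round $s>r$.
\item For every $s\geq r$, no block not extending $v$ has $2f+1$ distinct precommit senders in round $s$.
\end{enumerate}
The decision guards are gone, so (I1$'$) needs no ``not decided'' clause. Rounds continue unconditionally, but the third structural property still holds: a process reaches a round above $r$ only by leaving $r$ through Rule~\ref{ln:exit} on a quorum of round $r$. As before, I would show that no event is the first to break any invariant.

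For (I1$'$), a process leaving $r$ records $(r,\id(v))$ via Rule~\ref{ln:obs}. Any later update is to $(s,\id(y))$ with $s\geq r$, and it comes from a certificate of round $s$. That certificate extends $v$ by (I3$'$), which holds before the event. For (I3$'$), a first violation at $s>r$ needs $f+1$ correct senders, which contradicts (I2$'$). At $s=r$ it contradicts the counting, since the only certified block there is $v$ itself.

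The main work is (I2$'$). Take a correct $p$ in round $s>r$ that precommits a block $x$. By (I1$'$), $b_p\succeq v$.
\begin{itemize}[nosep]
\item \emph{Rule~\ref{ln:fresh}.} This rule requires $\validRound_p=-1$, which is false, or $x\succeq b_p$. The latter gives $x\succeq b_p\succeq v$ by transitivity.
\item \emph{Rule~\ref{ln:repro}.} The proposal carries a certificate for a parent $y$ in round $vr$, with $x\succeq y$. Either $x\succeq b_p$, and we conclude as above, or $\validRound_p\leq vr$. In the second case $vr\geq r$, so (I3$'$) gives $y\succeq v$ and hence $x\succeq v$.
\item \emph{Exit precommit.} This requires a certificate in round $s\geq r$, so (I3$'$) gives $x\succeq v$.
\item \emph{Timeout.} This sends $\nil$.
\end{itemize}
The step I expect to be delicate is making sure that $\succeq$ is well defined and transitive on identifiers, which needs collision resistance and verified parent links. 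A Byzantine block must not be able to claim ancestry falsely. I would discharge this by the model's assumption that ancestry used in the extension checks is verified.

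For the consequence, take decisions of $x$ in round $r$ and $w$ in round $s$, and assume without loss of generality $s\geq r$. The $n-f\geq2f+1$ precommits for $w$ form a certificate in round $s$, so $w\succeq x$ by the main claim. Hence any two decided blocks are comparable. Committing a block also commits its ancestors, and ancestors of comparable blocks lie on one chain, so all committed blocks form a single chain.
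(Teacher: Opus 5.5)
Your proposal is correct and is essentially the paper's proof: it substitutes $b_p\succeq v$ and $x\not\succeq v$ into (I1)--(I3) of Theorem~\ref{thm:walk-safe}, drops the not-decided clause, and closes each rule's case by transitivity of $\succeq$. You derive the consequence exactly as the paper does, from the certificate formed by the later decision's $n-f$ precommits. Your single treatment of valid-state updates also covers line~\ref{ln:wfv}, which the paper lists separately. The decision rule needs no separate case because it changes no state that the invariants mention.
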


\begin{proof}
Replace, in the statements (I1) to (I3) of the proof of Theorem~\ref{thm:walk-safe}, ``$\validValueId_p=\id(v)$'' by ``$b_p\succeq v$'', ``$x \neq v$'' by ``$x \not\succeq v$'', drop the clause about decided processes in (I1), since processes keep running, and keep the counting for round $r$, where the only certified block is $v$ itself. The case analysis is the same with one more argument per rule; the decision rule changes no state that the invariants mention.

Rule~\ref{ln:fresh} at a process above $r$ needs $x\succeq b_p\succeq v$, so $x \succeq v$. Rule~\ref{ln:repro} at a process above $r$ needs either $x\succeq b_p\succeq v$, or $\validRound_p \leq vr$, which gives $vr \geq r$, a certified parent $y \succeq v$ by (I3), and $x \succeq y \succeq v$; its write to $\validPair{p}$ is $(vr, \id(y))$ with $y \succeq v$.

A non-nil exit precommit (line~\ref{ln:exitvote}) is for a block with $2f+1$ precommits in a quorum of round $\round_p \geq r$, hence certified there, and Rule~\ref{ln:obs} records a certificate of round $\round_p \geq r$; both blocks extend $v$ by (I3). Line~\ref{ln:wfv} adopts a certificate of a round above $\validRound_p \geq r$, likewise. Transitivity of $\succeq$ closes each case.

For the consequence, let $v$ and $w$ be decided in rounds $r \leq s$; $w$ has a certificate in round $s$, so $w \succeq v$.
\end{proof}

\paragraph{Validity.}
For Validity, induct on the first correct precommit for each block in the causal order of the execution. A non-nil exit precommit requires earlier support for the same block and therefore an earlier correct precommit. Otherwise a correct process accepts a proposal. A fresh proposal is application-valid by the fresh-proposal check. A proposal extending a certified parent has an application-valid parent by induction, and its new extension is checked before the correct precommit. Parent links are authenticated by the block identifiers. Every decision certificate includes a correct precommit, so committed blocks and their ancestors are valid.

\paragraph{Continuing progress.}
Continuing progress for this extension requires a separate argument. The required statement is that every correct process's committed-chain height grows without bound. Parent and ancestor data must be retrievable, and the time needed to construct and validate an application-valid child must be included in the timeout bounds. Correct proposers must create strict new children, and the application must permit a valid child, including an empty block when appropriate. Transaction inclusion additionally requires a fair proposal policy and assumptions about continued transaction validity. These requirements are not established by the per-height Termination proofs, so this appendix claims prefix Agreement and Validity, but not a complete SMR progress theorem.

\paragraph{Justify.}
The same construction applies to Justify: a proposal carries a new full block $x$, the support certificate of its parent and the dead certificates for intervening rounds. Replace the certified-identifier equality test by checking that $x$ is an application-valid child of that certified parent; with $vr=-1$, the parent is genesis.

Replace equality with extension in the proof of Theorem~\ref{thm:justify-safe}; a conflicting block's justification either needs a nonexistent dead certificate for a decided round or earlier support for a conflicting block. The first such conflict is therefore impossible. The validity induction above applies; constructing the child is subject to the data and execution requirements above, so the immediate-broadcast and latency claims of Theorem~\ref{thm:responsive} do not automatically carry over.

When every round certifies a block, the justification contains only the parent certificate. A certified but undecided block can therefore be extended rather than reproposed, while committed blocks form a common prefix.

\section{Implementation assumptions and related protocols}
\label{app:implementation}

\paragraph{Realizing the gossip property.}
The timing arguments use gossip to deliver round-change evidence to lagging processes, recorded certificates to a proposer, and the value behind a reproposed identifier to deciding processes. Over partially synchronous links, this requires retaining and retransmitting signed evidence and proposed values, including Byzantine-origin precommits and certificates carried inside other messages. Information received before $\GST$ must still be recovered after $\GST$.

One-time forwarding or an unspecified fetch-by-identifier operation does not establish the bound. Retrieval, retransmission and relay delays must be included in $\Delta$, and evidence and data must remain available after a local decision. Reproposals may attach their support certificate; Justify attaches the complete justification by construction. These are implementation requirements for the stated dissemination abstraction, not additional safety assumptions.

\paragraph{Signers and implementation.} Under Walk and Justify a process precommits at most once per round and only in its current round, so signing rules that enforce monotone (height, round, step) are unchanged. Vote accounting must track support per (round, value), since two values can each have $2f+1$ precommits in a round, and must admit an equivocator's conflicting precommit when a peer reports support for the other value. Precommits of rounds ahead are cached.

A recovery implementation must durably log or reconstruct sent precommits, valid-state updates, round changes and their evidence before resuming consensus. Logging the valid pair only when signing is insufficient: a later observation can update it before the next signature. Justify has no valid pair, but still needs its sent-vote and round-change evidence. The proofs assume correct processes retain state; crash recovery must restore the same abstract state.

Processes sending an exit precommit may lack the block data of the certified value and fetch it before deciding; Validity does not depend on them having it.

\paragraph{Adding a lock.} Tendermint sets its lock after observing matching prevotes and before sending a non-nil precommit. Restoring that rule would require the prevote step that Fast Tendermint removes. An alternative would be to lock whenever a process sends a non-nil precommit. Such a lock would retain information at the processes in the second case of Section~\ref{sec:violation}, but not at processes that skip the round without precommitting. It would also require a safe unlocking rule. We do not specify that alternative here.

\paragraph{Related protocols.} Minimmit~\cite{minimmit} advances views on a notarization, whose threshold is below the finalization threshold, or on a nullification assembled from dedicated nullify messages; its notarization and finalization thresholds correspond to our support and decision certificates, while the certificate contents and voting rules differ. Our nil and dead certificates are made of ordinary precommits, and a correct process sends only one precommit per round. Timed Jump uses a support certificate of a later round to schedule round advancement and never precommits in an earlier round.

Simplex~\cite{simplex} enters an iteration on a notarized chain; the justification of Section~\ref{sec:justify} plays that role. HotStuff~\cite{hotstuff} carries the highest quorum certificate in votes and proposals; Carry and Justify are the two placements.

Kudzu~\cite{kudzu} and Hydrangea~\cite{hydrangea} keep a slow path at $f < n/3$; BlueBottle~\cite{bluebottle} decides over a DAG. The bound $n \geq 5f-1$ for good-case decisions in two communication steps~\cite{goodcase,kuznetsov} is not approached here.

Model checking Walk and Justify against the invariants (I1) to (I3) remains future work.


\begin{thebibliography}{99}
\small
\setlength{\itemsep}{2pt}
\setlength{\parsep}{0pt}
\bibitem{tendermint} E.~Buchman, J.~Kwon, Z.~Milosevic. The latest gossip on BFT consensus. arXiv:1807.04938v3, 2019. Originally submitted in 2018.
\bibitem{ft} P.~Vander Vos, D.~Cason. Fast Tendermint: speeding up a foundational consensus protocol. arXiv:2608.13434v1, 2026.
\bibitem{minimmit} B.~Kobayashi Chou, A.~Lewis-Pye, P.~O'Grady. Minimmit: fast finality with even faster blocks. arXiv:2508.10862v7, 2026. Originally submitted in 2025.
\bibitem{simplex} B.~Y.~Chan, R.~Pass. Simplex consensus: a simple and fast consensus protocol. TCC 2023. Cryptology ePrint Archive 2023/463.
\bibitem{hotstuff} M.~Yin, D.~Malkhi, M.~K.~Reiter, G.~Golan Gueta, I.~Abraham. HotStuff: BFT consensus with linearity and responsiveness. PODC 2019.
\bibitem{kudzu} V.~Shoup, J.~Sliwinski, Y.~Vonlanthen. Kudzu: fast and simple high-throughput BFT. arXiv:2505.08771, 2025.
\bibitem{hydrangea} N.~Shrestha, A.~Kate, K.~Nayak. Hydrangea: optimistic two-round partial synchrony with improved fault resilience. Cryptology ePrint Archive 2025/1112.
\bibitem{bluebottle} P.~Vander Vos, A.~Sonnino, G.~Tsimos, P.~Jovanovic, L.~Kokoris-Kogias. BlueBottle: fast and robust blockchains through subsystem specialization. arXiv:2511.15361, 2025.
\bibitem{goodcase} I.~Abraham, K.~Nayak, L.~Ren, Z.~Xiang. Good-case latency of Byzantine broadcast: a complete categorization. PODC 2021. doi:10.1145/3465084.3467899.
\bibitem{kuznetsov} P.~Kuznetsov, A.~Tonkikh, Y.~X.~Zhang. Revisiting optimal resilience of fast Byzantine consensus. PODC 2021. doi:10.1145/3465084.3467924.
\end{thebibliography}
\end{document}